\newtheorem{definition}{Definition}
\newtheorem{lemma}{Lemma}
\newtheorem{example}{Example}
\newcommand{\netname}{\textit{ALICE}}
\newcommand{\ksize}{\textit{k}}
\newcommand{\aggname}{\textit{ConNet}}
  \providecommand\BibTeX{{%
    \normalfont B\kern-0.5em{\scshape i\kern-0.25em b}\kern-0.8em\TeX}}}
\begin{document}

\title{Neural Attributed Community Search at Billion Scale
}

\author{Jianwei Wang$^{1}$ ~~ Kai Wang$^{2}$ ~~ Xuemin Lin$^{2}$~~Wenjie Zhang$^{1}$ ~~Ying Zhang$^{3}$ ~~\\
 \normalsize{$^{1}${The University of New South Wales, Sydney, Australia}}\\
 \normalsize{$^{2}${Shanghai Jiao Tong University, Shanghai, China}} \\
 \normalsize{$^{3}${University of Technology Sydney, Sydney, Australia}} \\
\normalsize{jianwei.wang1@unsw.edu.au, w.kai@sjtu.edu.cn, xuemin.lin@sjtu.edu.cn, zhangw@cse.unsw.edu.au, ying.zhang@uts.edu.au}\\
}

\renewcommand{\shortauthors}{Jianwei Wang et al.}
\renewcommand{\shorttitle}{Research Data Management Track Paper}

\begin{abstract}
  Community search has been extensively studied in the past decades.
In recent years, there is a growing interest in attributed community search that aims to identify a community based on both the query nodes and query attributes. A set of techniques have been investigated. Though the recent methods based on advanced learning models such as graph neural networks (GNNs) can achieve state-of-the-art performance in terms of accuracy, we notice that 1) they suffer from severe efficiency issues; 2) they directly model community search as a node classification problem and thus cannot make good use of interdependence among different entities in the graph.
Motivated by these, in this paper, we propose a new neur\underline{\textbf{AL}} attr\underline{\textbf{I}}buted \underline{\textbf{C}}ommunity s\underline{\textbf{E}}arch model for large-scale graphs, termed \netname. 
\netname~first extracts a candidate subgraph to reduce the search scope and subsequently predicts the community by the \underline{Con}sistency-aware \underline{Net}, termed \aggname. 
Specifically, in the extraction phase, we introduce the density sketch modularity that uses a unified form to combine the strengths of two existing powerful modularities, i.e., classical modularity and density modularity. Based on the new modularity metric, we first adaptively obtain the candidate subgraph, formed by the $k$-hop neighbors of the query nodes, with the maximum modularity. Then, we construct a node-attribute bipartite graph to take attributes into consideration.
After that, \aggname~adopts a cross-attention encoder to encode the interaction between the query and the graph. The training of the model is guided by the structure-attribute consistency and the local consistency to achieve better performance.
Extensive experiments over 11 real-world datasets including one billion-scale graph demonstrate the superiority of \netname~ in terms of accuracy, efficiency, and scalability. Notably, \netname~ can improve the F1-score by 10.18\% on average and is more efficient on large datasets in comparison to the state-of-the-art. \netname~ can finish training on the billion-scale graph within a reasonable time whereas state-of-the-art can not.

\end{abstract}

\begin{CCSXML}
<ccs2012>
   <concept>
       <concept_id>10002951.10003227</concept_id>
       <concept_desc>Information systems~Information systems applications</concept_desc>
       <concept_significance>500</concept_significance>
       </concept>
 </ccs2012>
\end{CCSXML}


\keywords{Attributed Community Search; Graph Neural Networks} 

 \maketitle

\vspace{-3mm}
\section{Introduction}
\label{sec:intro}

\begin{figure}
  \centering
  \includegraphics[width=0.80\linewidth]{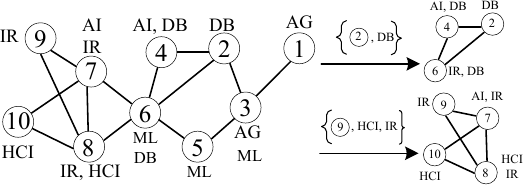}
  \vspace{-4mm}
  \caption{An illustration of attributed community search: The left panel illustrates a citation graph, whereas the right panel displays the retrieved communities corresponding to queries on each arrow.}
  \label{fig:case}
\vspace{-5mm}
\end{figure}

Graph-structured data has shown particular advantages for modeling relationships and dependencies between objects, making it a powerful tool for data analytics in various fields such as social networks~\cite{fan2019graph, newman2002random, wang2023towards, zhang2018finding}, biological networks~\cite{zhang2021graph, hetzel2021graph} and financial networks~\cite{xiang2022temporal, cheng2020knowledge}. One of the core tasks of graph analytics is community search (CS)\cite{cui2013online, fang2020survey, hu2016querying, huang2019community, zhang2018finding} that aims to find a subgraph containing the specific query nodes, with the resulting subgraph (community) being a densely intra-connected structure. 
{\color{black} In many real-world applications, nodes are often associated with attributes~\cite{pfeiffer2014attributed, wang2021survey}.} As such, it is desirable to query using not just query nodes, but also query attributes. 
Attributed Community Search (ACS)~\cite{fang2016effective, fang2020survey}, a related but more challenging problem compared to CS, is proposed to deal with such applications. ACS aims to identify a community based on both query nodes and attributes, with the resulting community expected to demonstrate {\color{black}structure cohesiveness and semantic homogeneity}. {\color{black} Studying ACS can benefit various applications, e.g., extracting biologically significant clues of protein-protein interaction networks~\cite{huang2017attribute, bhowmick2015clustering}, finding the research communities in the collaboration networks~\cite{fang2016effective}, detecting fraudulent keywords of the web search~\cite{yang2021exploiting}.}
In light of the significance and popularity of ACS, a spectrum of algorithms~\cite{fang2016effective, huang2017attribute, jiang2022query, gao2021ics} have been developed, which can be classified into two categories: non-learning-based techniques and learning-based approaches.

\begin{figure*}
  \centering
  \includegraphics[width=0.96\linewidth]{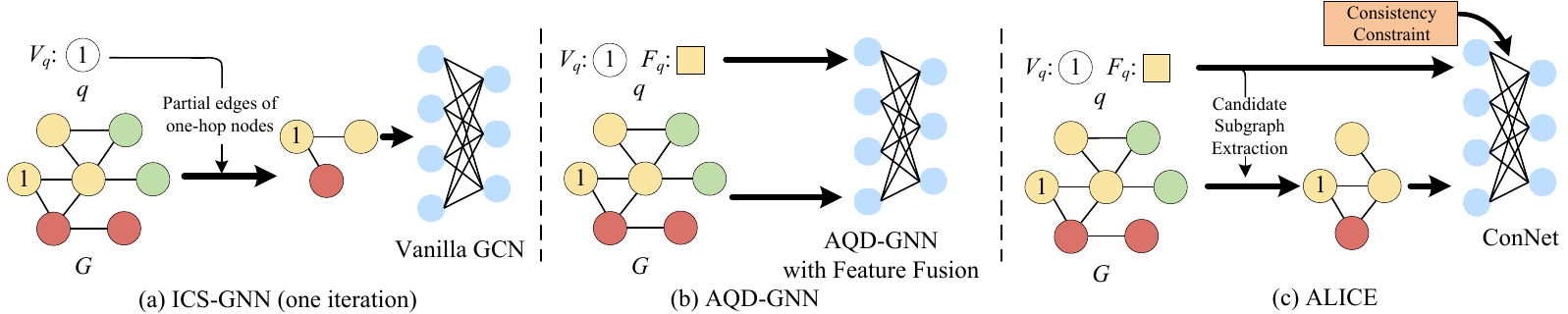}
  \vspace{-3mm}
  \caption{The framework of learning-based (attributed) community search models}
  \label{fig:sota_compare}
  \vspace{-4mm}
\end{figure*}

\vspace{1mm}
\noindent \textbf{\color{black}{Existing solutions.}} 
Existing non-learning-based attributed community search algorithms~\cite{fang2016effective, huang2017attribute} use a decoupled scheme that treat structure and attribute separately. They first search for structural-cohesive nodes based on the pre-defined cohesive subgraph models such as \ksize-core~\cite{fang2016effective} and \ksize-truss~\cite{huang2017attribute}. Subsequently, the algorithms compute the score of attribute cohesiveness to identify the most relevant communities.
These non-learning-based methods, however, are constrained by two primary limitations~\cite{jiang2022query}: 1) Structure inflexibility. The pre-defined subgraph models rely heavily on the hyper-parameter \ksize~and the community quality is sensitive to \ksize. In addition, the fixed subgraph models place a highly rigid constraint on the topological structure of communities, making it difficult for real-world communities to meet such priors. 2) Attribute irrelevance. These algorithms consider each attribute independently, which fails to capture the latent correlations between attributes, and thus restricts the exploration capability in the semantic space.

In order to alleviate the above issues, learning-based techniques with Graph Neural Networks (GNNs) have been proposed including ICS-GNN~\cite{gao2021ics} and AQD-GNN~\cite{jiang2022query}. Their frameworks are illustrated in Figure~\ref{fig:sota_compare}(a) and Figure~\ref{fig:sota_compare}(b), respectively. They refrain from imposing constraints on the community structure, and the attributes are propagated through edges to enhance their connection. ICS-GNN is designed for interactive community search that aims to gradually find the community in multiple iterations. In each iteration, a Vanilla Graph Convolutional Network (GCN) model~\cite{DBLP:conf/iclr/KipfW17} is directly applied to the one-hop neighbors of the query nodes. AQD-GNN is proposed to support ACS that inputs both the query \textit{q} and data graph \textit{G}, and a feature fusion operator is utilized to combine the representations of the query node, query attribute, and data graph to predict the final community.

Although the above-mentioned learning-based approaches demonstrate remarkable performance, particularly AQD-GNN, which has achieved state-of-the-art accuracy for ACS as evaluated in~\cite{jiang2022query}, two main limitations persist for existing learning-based approaches. 

Firstly, both ICS-GNN and AQD-GNN encounter significant efficiency problems. ICS-GNN requires the entire model to be retrained when a new query is received. However, training a model is a time-consuming process. In addition, AQD-GNN takes the entire graph as input to learn the representation of each node and searches the entire graph to determine whether nodes belong to the community or not. Both learning and searching the complete graph can be time-consuming, hindering efficiency and scalability.

Secondly, both ICS-GNN and AQD-GNN directly recast the community search as a node classification problem, while the interdependence among different entities remains insufficiently explored: 1) The intricate interaction between query and data graph is overlooked, despite their strong correlation w.r.t the final community. ICS-GNN only inputs the candidate subgraph for GNNs, missing the query. {{\color{black} AQD-GNN encodes query nodes, query attributes, and the graph separately, which employs a fusion operator to concatenate them. Nevertheless, the interaction between the query and each node in the graph remains insufficiently explored.}
2) The correlation between structure and attribute remains inadequately investigated. Although they encode both graph structure and attribute simultaneously, they still fall short in capturing the correlation between the representations of query nodes and attributes, which is essential for structure and attribute constraints. 
3) Both methods typically disregard the connection among nodes within a community. The community is a cohesive component that considers multiple nodes together, whereas node classification focuses on individual node properties. While it is possible to use breath first search (BFS) to select nearby connected nodes after learning, it can still harm accuracy since the optimization signal cannot be backpropagated.

Therefore, to design an efficient and effective learning-based approach for ACS at a large scale, two main challenges exist below.

\textit{Challenge 1: How to efficiently perform learning-based ACS at a large scale?} A direct way is discarding unpromising nodes/edges at an early stage. However, the size of real-world communities can differ a lot. If we limit our selection to a small portion of nodes as candidates (e.g., ICS-GNN solely depends on one-hop neighbors of the query node), we risk losing many promising nodes; otherwise, an overly broad search scope presents computational difficulties. Thus, how to adaptively select promising candidates while taking both structure and attribute into account is challenging.

\textit{Challenge 2: How to effectively exploit the interdependence among different entities to enhance prediction accuracy?} 
There exist abundant entities for ACS including query, data graph, structure, and attribute, while some of them (like structure and attribute) are from heterogeneous spaces ~\cite{chen2020learning}. Thus, how to collaboratively utilize these entities, capture their interactions and consistency in the latent space, and improve the overall accuracy is challenging.

\vspace{1mm}
\noindent \textbf{Our solutions.}
To tackle the above challenges, in this paper, we propose a new neur\underline{\textbf{AL}} attr\underline{\textbf{I}}buted \underline{\textbf{C}}ommunity s\underline{\textbf{E}}arch model for large-scale graphs, namely \netname~(in Figure~\ref{fig:sota_compare}(c)). {\color{black}} \netname~ is a two-stage framework that first extracts a candidate subgraph and subsequently searches the community over the candidate subgraph by the \underline{Con}sistency-aware \underline{Net} (\aggname).

\vspace{1mm}
\noindent \textit{(1) Candidate Subgraph Extraction}. 
To address Challenge 1, it is crucial to find an effective model for evaluating the cohesiveness of a subgraph. Here we resort to modularity, a parameter-free metric that has been extensively utilized in finding communities~\cite{fortunato2007resolution, barber2007modularity, guo2023resolution, DBLP:conf/sigmod/KimLCY22}. 
{\color{black}However, existing proposed modularities either select too many loosely connected nodes due to the free-rider effect~\cite{wu2015robust} and the resolution limit problem~\cite{fortunato2007resolution}, like the classical modularity~\cite{newman2004finding}; or impose overly stringent requirements on cohesiveness, which may hinder the exploration of promising nodes, like the density modularity~\cite{DBLP:conf/sigmod/KimLCY22}. 
To alleviate this situation, we propose a novel form of modularity called \textit{density sketch modularity} that uses a unified form to balance and combine the strengths of the above two modularities. }
{\color{black}Then, we adaptively select the structure-based candidate subgraph $H$ induced by the $k$-hop neighbors of the query nodes s.t. $H$ has the maximum density sketch modularity value. In this way, we do not need to pre-set the value of $k$.
Similarly, to select nodes that possess similar attributes to the query attributes, we construct a node-attribute bipartite graph. With the bipartite graph, we select the attribute-based candidate nodes based on the subgraph that is induced by the $k$-hop neighbors of the query attributes and has the largest bipartite modularity value.} 

\vspace{1mm}
\noindent \textit{(2) Consistency-aware Net}. To address Challenge 2, we further devise a novel GNN-based consistency-aware net, namely \aggname~{\color{black}to capture the correlation and consistency}. It has three main components: \\
1) \textit{Cross-attention encoder}. We design a cross-attention encoder that aims to weigh the correlation between each query node (\textit{resp.} attribute) and graph node (\textit{resp.} attribute) and utilize this correlation to learn the structure (\textit{resp.} attribute) representation. {\color{black}In contrast to AQD-GNN, which encodes query locally in one layer, \aggname~learns a representation that effectively combines the interaction of the query and each node in the graph.}
2) \textit{Structure-attribute consistency}. We devise a structure-attribute consistency module inspired by the recent representation learning methodology that brings related entities closer together in the latent space \cite{mikolov2013distributed, radford2021learning, DBLP:conf/iclr/ShiaoG0PLS23}. {\color{black}In light of the high correlation between structure and attribute}, we propose a new approach that aims to minimize the Wasserstein distance between the distribution of structure representation and the distribution of attribute representation. 
3) \textit{Local consistency}. We develop a local consistency module, based on the observation that if a node belongs to a community, its neighboring nodes exhibit a high likelihood of being part of the same community and vice versa.  It aims to pull closer the prediction results of nodes that are linked together. ACS is then modeled as multi-task learning that signals from the ground-truth labels and signals from the two consistency constraints are then optimized together.

\vspace{1mm}
\noindent \textbf{Contributions.}
Here we summarize our main contributions:
\begin{itemize}[leftmargin=10pt, topsep=1pt]
\item{} To enhance the performance of ACS, we propose a novel learning-based method \netname~ that first extracts promising candidate subgraph and subsequently searches communities by the \aggname.
\item{} We design an efficient subgraph extraction algorithm by leveraging a new form of modularity (i.e., density sketch modularity) and node-attribute relationship to adaptively select promising nodes. {\color{black}As evaluated, our approach can significantly reduce the training graph size (e.g., on the \textit{Orkut} dataset with {\color{black}3.07M} nodes, only about {\color{black}$<$1\%} of nodes need to be passed to the next stage).} 
\item{} We propose a GNN-based model \aggname~to preserve both structure-attribute consistency and local consistency among nodes. It employs a cross-attention encoder to effectively capture the interaction between the query and the data graph.
\item{} Extensive experiments are conducted over 11 popular public datasets, encompassing one billion-scale graph \textit{Friendster}. The results demonstrate that \netname~ can substantially improve both the search accuracy and the efficiency compared with existing methods. {\color{black}It can elevate the F1-score by 10.18\% on average under the setting of query attributes generated from query nodes and is more efficient on large datasets \textit{Google+} and \textit{PubMed} compared with AQD-GNN~\cite{jiang2022query}}. Moreover, \netname~ can finish training on large datasets \textit{Reddit}, \textit{Orkut} and \textit{Friendster} within a reasonable time, whereas ADQ-GNN can not. 

\end{itemize}

\vspace{1mm}
\noindent \textbf{Roadmap.}
Section~\ref{sec:pre} introduces the preliminaries. Section~\ref{sec:overview} gives an overview of the whole framework while Section~\ref{sec:Candidate_Subgraph_Extraction} and Section~\ref{sec:CoNet} elaborate detailed techniques. Section~\ref{sec:experiment} reports experimental results. Section~\ref{sec:relatedwork} reviews related work. Section~\ref{sec:conclusion} concludes this paper.

\vspace{-3mm}
\section{Preliminaries}
\label{sec:pre}

Our problem is defined over an undirected attributed graph $G(V, E, F)$ where $V$ is the set of nodes with a cardinality of $\left|V \right|=n$ and $E\subseteq V\times V$ is the set of edges. 
$F=\{F_1, \cdots, F_n\}$ is the set of node attributes and $F_i$ is the attributes of node $v_i$. Note that each node may have multiple attributes. 
We use $F^d$ to denote the set of distinct attributes.
The community is denoted by $C(V_C, E_C, F_C) $ where $V_C\subseteq V$ and $F_C\subseteq F$. 
For each $e=(v_i, v_j) \in E_C$, $e \in E$ and $v_i, v_j \in V_C$. {\color{black}The query $q=<V_q, F_q>$ consists of the query nodes $V_q$ and query attributes $F_q$.}
$C_q$ is utilized to denote the corresponding community w.r.t. $q$. When the context is clear, we abbreviate $C_q$ as $C$.
The frequently used notations are summarized in Table~\ref{tab:symbol}.

\vspace{1mm}
\noindent \textbf{Graph Modularity.} We use modularity which is a common metric of graph cohesiveness for the candidate subgraph extraction. {\color{black}It is a parameter-free measure~\cite{DBLP:conf/sigmod/KimLCY22} and represents the proportion of edges that belong to a particular group minus the expected proportion if the edges are randomly distributed}. The higher the graph modularity is, the more cohesive the community is. The classical modularity of a community is defined as:

\begin{definition}{(Classical Modularity~\cite{newman2004finding}).} 
Given a graph $G(V, E, F)$ and a community $C(V_C, E_C, F_C) $, the classic modularity is defined as:

\begin{equation}
    \label{equ:classical_modularity}
    \textrm{CM}(G, C)=\frac{1}{2\left| E\right|}(2\left| E_C\right|-\frac{d_C^2}{2\left| E\right|})
\end{equation}
where $d_C$ is the sum of degrees of the nodes in $C$.

\end{definition}

\begin{table}[t]
\centering 
\caption{Symbols and Descriptions}
\vspace{-0.4cm}
\label{tab:symbol}
\begin{tabular}{|p{1.8cm}|p{6.0cm}|}
\hline
\cellcolor{lightgray}\textbf{Notation} & \cellcolor{lightgray}\textbf{Description} \\ \hline
$G(V, E, F)$ & a graph with attributes in node\\ \hline
$C(V_C, E_C, F_C)$ & a community\\ \hline
$q=<V_q, F_q>$ & a query with node set $V_q$ and attribute set $F_q$\\ \hline
$C_q, \tilde{C}_q$ & the ground-truth/estimated community of $q$\\ \hline
$CM(\cdot),DM(\cdot)$ & classical modularity and density modularity\\ \hline
$BM(\cdot)$ & bipartite modularity for bipartite community\\ \hline
$DSM(\cdot)$ & density sketch modularity\\ \hline
\end{tabular}
\vspace{-0.6cm}
\end{table}

{\color{black} When employing classic modularity for CS~\cite{DBLP:conf/sigmod/KimLCY22}, it suffers from the free-rider effect~\cite{wu2015robust} wherein the resulting community may encompass numerous nodes unrelated to the query nodes, and the resolution limit problem~\cite{fortunato2007resolution} that the resultant community may be too large to highlight some important structures.}
\begin{definition}{(Free-rider effect~\cite{wu2015robust}).}
Given a set of query $q$, let $C$ be a community identified based on a goodness function $f$, and $C^*$ be the optimal solution (either local or global). The goodness function is said to be affected by the free-rider effect if $f(C\cup C^*)\geq f(C)$.
\end{definition}

\begin{definition}{(Resolution limit problem~\cite{fortunato2007resolution}).}
Given a graph $G$, query $q$, the objective function $f$, a community constraint $T$, a community $C$ satisfying $T$ and containing all the query $q$, and any community $C'$ satisfying the constraint $T$ such that $C\cup C'$ is connected and $C \cap C' = \varnothing$, the objective function is said to suffer from the resolution limit problem if there exists a community $C'$ such that $C\cup C'$ satisfies the constraint $T$ and $f(C\cup C')\geq f(C)$.
\end{definition}

Note that the free-rider effect is different from the resolution limit problem. The former pertains to finding the most effective solution for detecting the effect, whereas the latter operates under the assumption of independent communities and ensures connectivity between the given communities. To better alleviate the above issues and incorporate modularity for community search, density modularity is proposed in~\cite{DBLP:conf/sigmod/KimLCY22}. 
\begin{definition}{(Density Modularity~\cite{DBLP:conf/sigmod/KimLCY22}).}
Given a graph $G(V, E, F)$ and a community $C(V_C, E_C, F_C) $, the density modularity of $C$ is defined as: 
\begin{equation}
    \label{equ:density_modularity}
    \textrm{DM}(G, C)=\frac{1}{2\left| V_C\right|}(2\left| E_C\right|-\frac{d_C^2}{2\left| E\right|})
\end{equation}
where $d_C$ is the sum of degrees of the nodes in $C$.
\end{definition}

While there exist some other definitions of modularity like generalized modularity density~\cite{guo2023resolution}, recent research~\cite{DBLP:conf/sigmod/KimLCY22} claims that {\color{black}density modularity is one of the most effective forms of modularity for CS}. Therefore, we concentrate on analyzing and comparing classical modularity and density modularity in this paper.

\vspace{1mm}
\noindent \textbf{Graph Neural Networks.}
Modern GNNs follow a strategy of neighborhood aggregation mechanism, where the representation of a node is iteratively updated by aggregating the representations from its neighbors and its previous layer.
\begin{equation}
    \label{equ:pre_GIN}
    h_v^{(k)}=\textrm{M}\left ( h_v^{(k-1)}, \textrm{AGG}\{h_u^{(k-1)}:{u\in N(v)}\}   \right ) 
\end{equation}
where {\color{black} $h_v^{(k)}$ is the representation of node $v$ in layer $k$, $N(v)$ is the neighbors of node $v$}, $\textrm{AGG}$ is the aggregate function to aggregate messages, 
and $\textrm{M}$ is the message propagation function that updates the representation of the node by the aggregated representations and its own representation from the previous layer. Different variants of GNNs have been proposed according to their own method of assigning weights to the neighbors and aggregating information.

\vspace{1mm}
\noindent \textbf{{\color{black} The framework} for Learning-based ACS}. The general process for learning-based ACS includes two steps, i.e., the offline model training and the online query steps. The query set which contains both the query nodes and query attributes, the corresponding ground-truth community, and the graph with attribute information are used as inputs. It first trains the model on the training dataset offline and then utilizes the learned model to predict the test queries online. To ensure connectivity, the constrained BFS is used for community identification~\cite{jiang2022query} that selects the nodes with a score larger than the pre-defined threshold and there exists a path from the node to query while scores of nodes in the path are all larger than the pre-defined threshold.

\vspace{1mm}
\noindent \textbf{Problem Statement}. Given an attributed graph $G(V, E, F)$, and a query $q=⟨V_q, F_q⟩$ where $V_q \subseteq V$ is a set of query nodes and $F_q \subseteq F$ is a set of query attributes, the task of Attributed Community Search (ACS) aims to find a query-dependent community $C_q$, which preserves both structure cohesiveness and attribute homogeneity (i.e., nodes
in the community are densely intra-connected, and the attributes of these nodes are similar).

\section{Overview of \netname}
\label{sec:overview}

\begin{figure}
  \centering
  \includegraphics[width=0.85\linewidth]{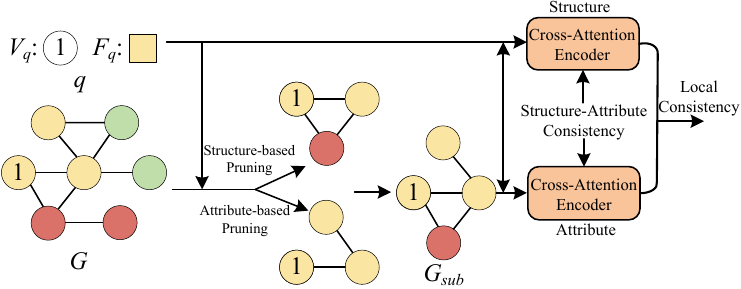}
  \vspace{-3mm}
  \caption{The framework of \netname}
  \label{fig:framework}
\vspace{-6mm}
\end{figure}

In this paper, we present a novel learning-based approach, named \netname, for solving the problem of ACS. The overall framework is illustrated in Figure~\ref{fig:framework}. Given the query set $q$ and the data graph $G$, \netname~first extracts the candidate subgraph from the data graph using the query. The pruning stage contains two branches to select the promising candidates. The first branch is to extract the candidate subgraph considering the structure cohesiveness, while the second branch is to extract the candidate subgraph considering the  semantic homogeneity. Both the structure-based candidates and attribute-based candidates are then combined as one candidate subgraph for the downstream prediction. The candidate subgraph together with the query set are sent to \aggname~to search for the community. The \aggname~comprises two branches, one dedicated to structure and the other to attribute, each utilizing the cross-attention encoder to learn the representations. {\color{black}Two consistency constraints including the structure-attribute consistency constraint and the local consistency constraint are used to guide the training.} After that, \aggname~outputs the predicted community. The detailed technique for candidate subgraph extraction is in Section~\ref{sec:Candidate_Subgraph_Extraction} and the detailed architecture of \aggname~is introduced in Section~\ref{sec:CoNet}.

\section{Candidate Subgraph Extraction}
\label{sec:Candidate_Subgraph_Extraction}

In this section, we introduce details of the candidate subgraph extraction scheme specifically devised for ACS. As previously discussed in Section~\ref{sec:intro}, the current cutting-edge ACS method is trained over the entire graph, thus inherently limiting its efficiency and scalability on large graphs. Here, we outline the desired features that serve as guidelines for developing the candidate subgraph extraction techniques: 1) \textit{Adaptiveness.} An ideal subgraph extraction method ought to be capable of adaptively selecting and determining an appropriate quantity of candidate nodes, considering both the query and the data graph. 2) \textit{Structure-Attribute awareness.} Since ACS aims to identify a structurally cohesive subgraph that upholds attribute homogeneity, the extraction method must pay heed to both the structure and attribute factors to retain as many promising candidate nodes as possible.
Driven by these requirements, we present a modularity-based extraction approach. The proposed method involves a twofold process: firstly, detecting structurally cohesive candidate subgraph; and secondly, engaging in attribute-based pruning. The resultant candidate nodes from both phases are subsequently combined to form the candidate subgraph.

\vspace{-3mm}
\subsection{Structure-based Pruning}
Our extraction scheme is based on modularity, a popular parameter-free metric of cohesiveness. As outlined in Section~\ref{sec:pre}, there exist multiple types of modularity defined for different scenarios. In this paper, our focus is on classical modularity, which is {\color{black}one of the earliest proposed modularities}, and density modularity, which is deemed one of the most powerful forms of modularity for CS, as documented in~\cite{DBLP:conf/sigmod/KimLCY22}. However, classical modularity is known to suffer from the free-rider effect and the resolution limit problem, which may result in the selection of too many loosely-connected nodes. On the other hand, density modularity may impose overly stringent requirements on cohesiveness, which may hinder the exploration of additional promising nodes. Hence, both are unsuitable for candidate subgraph extraction for ACS. To strike a balance and harness the benefits of the above two modularities, we propose the density sketch modularity as follows:

\begin{definition}{(Density Sketch Modularity).}
\label{def:density_generalized_modularity}
Given a graph $G(V, E, F)$, a community $C(V_C, E_C, F_C) $ and a positive real number $\tau \in \mathbb{R}^{+}$, the density sketch modularity is defined as:
\begin{equation}
    \textrm{DSM}(G, C)=\frac{1}{2\left| V_C\right|^{\tau}}(2\left| E_C\right|-\frac{d_C^2}{2\left| E\right|})
\end{equation}
where $d_C$ is the sum of degrees of the nodes in $C$.
\end{definition}

By manipulating the value of $\tau$, we can attain varying levels of cohesiveness granularity.

\begin{equation*}
    \lim \limits_{\tau \to 0} DSM(G,C) = \frac{1}{2}(2\left| E_C\right|-\frac{d_C^2}{2\left| E\right|})
\end{equation*}
When $\tau$ approximates zero, the difference between classical modularity and density sketch modularity is the $\left| E\right|$ in the denominator which is a constant throughout the community within the same data graph. However, we only compare its relative magnitudes within one data graph when using modularity. Hence, density sketch modularity shares the same power as classical modularity when $\tau$ approximates zero.

\begin{equation*}
    \lim \limits_{\tau \to 1} DSM(G,C) = \frac{1}{2\left| V_C\right|}(2\left| E_C\right|-\frac{d_C^2}{2\left| E\right|})
\end{equation*}
When $\tau$ approximates one, density sketch modularity is exactly the density modularity.

As proven below, density sketch modularity also shares two nice properties as density modularity for any varying $\tau \in \mathbb{R}^{+}$.
\begin{lemma}
\label{equ:free_rider}
Whenever density sketch modularity suffers from the free-rider effect, classic modularity suffers from the free-rider effect as well.
\end{lemma}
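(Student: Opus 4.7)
The plan is to exploit the fact that classical modularity and density sketch modularity share an identical numerator, so that any free-rider inequality established for $\textrm{DSM}$ can be transferred to $\textrm{CM}$ via a simple monotonicity comparison of the two normalizers. The whole argument should collapse to a one-line observation once the right algebraic decomposition is introduced.

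My first step would be to isolate the common kernel by defining $g(C) := 2|E_C| - \frac{d_C^2}{2|E|}$, so that $\textrm{CM}(G,C) = \frac{g(C)}{2|E|}$ and $\textrm{DSM}(G,C) = \frac{g(C)}{2|V_C|^{\tau}}$. The structural observation is that $\textrm{CM}$ normalizes by the graph-wide constant $2|E|$, which does not depend on the candidate community, whereas $\textrm{DSM}$ normalizes by the community-dependent quantity $2|V_C|^{\tau}$. This reduction makes the two metrics differ only in their denominators.

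Next I would rewrite the hypothesis $\textrm{DSM}(G, C \cup C^{*}) \geq \textrm{DSM}(G, C)$ in the equivalent form $g(C \cup C^{*}) \geq g(C) \cdot (|V_{C \cup C^{*}}|/|V_C|)^{\tau}$. Since $V_C \subseteq V_{C \cup C^{*}}$, the size ratio is at least $1$, and because $\tau > 0$ the scaling factor $(|V_{C \cup C^{*}}|/|V_C|)^{\tau}$ is likewise at least $1$. Assuming $g(C) \geq 0$, this immediately gives $g(C \cup C^{*}) \geq g(C)$; dividing both sides by the positive constant $2|E|$ then yields $\textrm{CM}(G, C \cup C^{*}) \geq \textrm{CM}(G, C)$, which is precisely the free-rider effect for classical modularity.

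The main obstacle I anticipate is the sign of $g(C)$. If $g(C) < 0$ were permitted, multiplying by a factor above $1$ would push the right-hand side \emph{further} below $g(C)$, and the chaining above would no longer imply $g(C \cup C^{*}) \geq g(C)$. I would dispose of this case by noting that $g(C) < 0$ already forces both $\textrm{CM}(G,C)$ and $\textrm{DSM}(G,C)$ to be negative, so no reasonable goodness function would have identified $C$ as a community in the first place; hence restricting attention to the regime $g(C) \geq 0$ is without loss of generality, and the lemma follows.
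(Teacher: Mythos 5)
Your proof is correct and follows essentially the same route as the paper's: both isolate the common kernel $g(C)=2|E_C|-\frac{d_C^2}{2|E|}$ and transfer the DSM inequality to CM by exploiting that the size ratio $\left(|V_{C\cup C^*}|/|V_C|\right)^{\tau}\ge 1$ while the CM normalizer $2|E|$ is community-independent. The only substantive difference is that you explicitly surface the sign condition on $g$ that the paper's chain silently assumes --- its step $g(C\cup C^*)\ \ge\ \left(\frac{|V_C|}{|V_{C\cup C^*}|}\right)^{\tau} g(C\cup C^*)$ is valid only when $g(C\cup C^*)\ge 0$, which follows from your (slightly stronger) assumption $g(C)\ge 0$ together with the hypothesis --- so your version is, if anything, more careful than the paper's own proof.
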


\begin{proof}
Assume that $C$ is an identified community, and $C^*$ is the optimal solution. If the density sketch modularity suffers from the free-rider effect, we can obtain from the definition that $DSM(G, C\cup C^*)\geq DSM(G, C)$. Putting the definition of \textit{DSM} into the inequality, we can get
$\frac{1}{2\left| V_{C\cup C^*}\right|^{\tau}}(2\left| E_{C\cup C^*}\right|-\frac{d_{C\cup C^*}^2}{2\left| E\right|}) \geq \frac{1}{2\left| V_C\right|^{\tau}}(2\left| E_C\right|-\frac{d_C^2}{2\left| E\right|})$. As  $2\left| V_C\right|^{\tau}$ is larger than zero for $\tau \in \mathbb{R}^{+}$, we can multiply both sides by $2\left| V_C\right|^{\tau}$ and get $\{\frac{\left| V_C\right|}{\left| V_{C\cup C^*}\right|}\}^{\tau}(2\left| E_{C\cup C^*}\right|-\frac{d_{C\cup C^*}^2}{2\left| E\right|}) \geq 2\left| E_C\right|-\frac{d_C^2}{2\left| E\right|}$. As $\{\frac{\left| V_C\right|}{\left| V_{C\cup C^*}\right|}\}^{\tau}$ is always smaller than 1 for $\tau \in \mathbb{R}^{+}$, hence we get the following inequality:
$2\left| E_{C\cup C^*}\right|-\frac{d_{C\cup C^*}^2}{2\left| E\right|} \geq \{\frac{\left| V_C\right|}{\left| V_{C\cup C^*}\right|}\}^{\tau}(2\left| E_{C\cup C^*}\right|-\frac{d_{C\cup C^*}^2}{2\left| E\right|}) \geq 2\left| E_C\right|-\frac{d_C^2}{2\left| E\right|}$. We use the first and third items from the inequality and multiply both sides by $\frac{1}{2\left| E\right|}$. After that we can get the following expression: $ \frac{1}{2\left| E\right|}(2\left| E_{C\cup C^*}\right|-\frac{d_{C\cup C^*}^2}{2\left| E\right|}) \geq \frac{1}{2\left| E\right|}(2\left| E_C\right|-\frac{d_C^2}{2\left| E\right|})$ which is exactly $CM(G, C\cup C^*)\geq CM(G, C)$. Therefore, if density sketch modularity suffers from the free-rider effect, classic modularity also suffers from the free-rider effect as well.
\end{proof}.

\vspace{-4mm}
\begin{lemma}
\label{equ:resolution_limit}
Whenever density sketch modularity suffers from the resolution limit problem, classic modularity suffers from the resolution limit problem as well. 
\end{lemma}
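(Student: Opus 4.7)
The plan is to follow the same algebraic template used in the proof of Lemma~\ref{equ:free_rider}, since the free-rider effect and the resolution limit problem are both characterized by an inequality of the form $f(C \cup C') \ge f(C)$; the only structural differences are the side conditions on $C'$ (disjointness from $C$, connectedness of $C \cup C'$, and preservation of the constraint $T$), which do not interact with the algebraic manipulation of the modularity values and can simply be carried through.

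First I would start from the hypothesis that $DSM$ suffers from the resolution limit problem, unpacking it as: there exists a community $C'$, disjoint from $C$, with $C \cup C'$ connected and satisfying $T$, such that $DSM(G, C \cup C') \ge DSM(G, C)$. Next I would substitute the definition of $DSM$ and multiply both sides by $2|V_C|^{\tau} > 0$ to clear the denominator on the right, yielding $\bigl(|V_C|/|V_{C\cup C'}|\bigr)^{\tau}\bigl(2|E_{C\cup C'}| - d_{C\cup C'}^{2}/(2|E|)\bigr) \ge 2|E_C| - d_C^{2}/(2|E|)$.

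The key step is then to drop the factor $\bigl(|V_C|/|V_{C\cup C'}|\bigr)^{\tau}$, which lies in $(0,1]$ since $V_C \subseteq V_{C\cup C'}$ and $\tau > 0$. This yields $2|E_{C\cup C'}| - d_{C\cup C'}^{2}/(2|E|) \ge 2|E_C| - d_C^{2}/(2|E|)$, and dividing through by $2|E|$ recovers exactly $CM(G, C\cup C') \ge CM(G, C)$. Since $C \cup C'$ still meets the same side conditions (connectivity and satisfaction of $T$) carried over from the hypothesis, this is precisely what it means for classical modularity to suffer from the resolution limit problem.

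The hard part will be justifying the sign condition needed to drop the scaling factor: multiplying by $\bigl(|V_C|/|V_{C\cup C'}|\bigr)^{\tau} \le 1$ preserves the inequality direction only when the quantity being scaled is non-negative, i.e. when $2|E_{C\cup C'}| - d_{C\cup C'}^{2}/(2|E|) \ge 0$. This does not follow from pure combinatorics but from the hypothesis itself: $DSM(G, C\cup C') \ge DSM(G, C) \ge 0$ for any proper community $C$, forcing the bracketed quantity on the left to be non-negative. I would state this as a preliminary observation, both here and retroactively in the proof of Lemma~\ref{equ:free_rider}, to close what would otherwise be a small gap in the argument.
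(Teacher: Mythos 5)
Your proof follows exactly the paper's approach: the paper's own proof of Lemma~\ref{equ:resolution_limit} is a one-line reduction that replaces $C^*$ with $C'$ in the proof of Lemma~\ref{equ:free_rider} and carries the side conditions ($C \cap C' = \varnothing$, connectivity of $G[C\cup C']$, and the constraint $T$) along unchanged, which is precisely your plan. Your closing observation is also well taken: dropping the factor $\bigl(|V_C|/|V_{C\cup C'}|\bigr)^{\tau} \le 1$ does require the scaled quantity $2|E_{C\cup C'}| - d_{C\cup C'}^{2}/(2|E|)$ to be non-negative, a sign condition the paper's proof of Lemma~\ref{equ:free_rider} uses without justification; your proposed fix (deriving non-negativity from $DSM(G,C)\ge 0$) closes the gap whenever the identified community has non-negative modularity, though that premise is itself an unstated modeling assumption rather than a consequence of the definitions --- the paper's own running example exhibits an induced subgraph with negative modularity.
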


\begin{proof}
    
The proof of LEMMA~\ref{equ:resolution_limit} is quite similar to that of LEMMA~\ref{equ:free_rider}. Assume that $C$ and $C'$ are two communities satisfying $C \cap C' = \varnothing$ and $G[C \cup C']$ being a connected subgraph. And then, we get the proof of LEMMA~\ref{equ:resolution_limit} by replacing the $C^*$ of inequalities in the proof of LEMMA~\ref{equ:free_rider} with $C'$.

\end{proof}
\vspace{-4mm}


{\color{black} When identifying the candidate subgraph, existing works either select a small portion of nodes (like ICS-GNN solely depends on one-hop neighbors of the query node) or use the whole graph (like AQD-GNN). Based on density sketch modularity, we choose the subgraph induced by the $k$-hop neighbors of the query nodes that has the highest modularity value as the candidate subgraph. Note that, in this manner, the candidate subgraph is obtained adaptively, and we do not need to pre-set the value of $k$. In addition, we use $\tau \in [0, 1]$ to control the granularity of the subgraph, and a higher $\tau$ value can produce a more cohesive subgraph. {\color{black}We set $\tau$ as 0.8 by default as suggested by our experimental results in Section \ref{Ablation_Study}.}}

\begin{example}
For the query node 4 in Figure~\ref{fig:case}, its 1-hop subgraph contains nodes 2, 4, and 6. Thus, its modularity is $\frac{1}{2\times3^{0.8}}(2\times 3-\frac{10^2}{2\times14})=0.504$ by \textit{DSM} with $\tau=0.8$. Similarly, we can get its modularity of 2-hop induced subgraph $\frac{1}{2\times7^{0.8}}(2\times9-\frac{23^2}{2\times14})=-0.094$ and the modularity of 3-hop induced subgraph is $\frac{1}{2\times10^{0.8}}(2\times14-\frac{28^2}{2\times14})=0.0$. Hence, the subgraph induced by its 1-hop neighbors is selected as the structure-based candidate subgraph. 
\end{example}

\vspace{-3mm}
\subsection{Attribute-based Pruning}

\begin{figure}[t]
  \centering
  \includegraphics[width=1.00\linewidth]{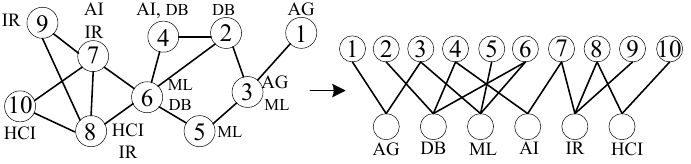}
  \vspace{-7mm}
  \caption{node-attribute bipartite graph}
  \label{fig:fusion_graph}
\vspace{-4mm}
\end{figure}

\begin{algorithm}[t]
\caption{Candidate Subgraph Extraction}
\label{algo:k-hop modularity}
\LinesNumbered
\DontPrintSemicolon
\KwIn{The query $q=<V_q,F_q>$, the attributed graph \color{black}{$G=(V,E,F)$}. }
\KwOut{The candidate subgraph $G_{sub}$}

\tcp*[l]{The structure-based pruning.}

Initialize set $P=V_q$, $Q=V_q$, {\color{black}{$V_{sub}=V_q$}}, $max\_mod=-inf$\;
\While{$\left|Q \right| \textless \left|V \right| $}{
    \For{each $v \in Q$}{
    $P\leftarrow P\cup N(v)$
}
$Q\leftarrow P$; $mod$ = calculate \textit{DSM}(G, Q)\;
\If{$mod \textgreater max\_mod$}{
 $max\_mod \leftarrow mod$; {\color{black}{$V_{sub}=V_{sub}\cup Q$}}
}
}

\tcp*[l]{The attribute-based pruning.}

$BG(V_B=(U,L), E_B) \leftarrow $ construct the bipartite graph.\;
$P, Q \leftarrow $ Query attribute node in $BG$; $max\_mod=-inf$\; 

\While{$\left|Q \right| \textless \left|U \right|+\left|L \right| $}{
    \For{each $v \in Q$}{
    $P\leftarrow P\cup N(v)$
}
$Q\leftarrow P$; $mod$ = calculate \textit{BM}(BG, Q)\;
\If{$mod \textgreater max\_mod$}{
 $max\_mod \leftarrow mod$; {\color{black}{$V_{sub}=V_{sub}\cup Q.U$}}
}
} 
{\color{black}{$G_{sub}\leftarrow $induced subgraph from $V_{sub}$}}\;
\Return $G_{sub}$\;
\end{algorithm}

Attributes play a crucial part in the search for the attributed community. In order to establish the connection between nodes and attributes, we create a node-attribute bipartite graph $BG(V=(U,L), E)$ based on the approach in ~\cite{jiang2022query}. 
The node-attribute bipartite graph contains two types of node sets: the graph node set $U$ and the attribute node set $L$. Each distinct attribute is represented by an attribute node in the $L$ side of the node-attribute bipartite graph, and there is a link between $u\in U$ and $l\in L$ if $l$ is the attribute of node $u$ in the original graph.

\begin{figure}[t]
  \centering
  \includegraphics[width=0.85\linewidth]{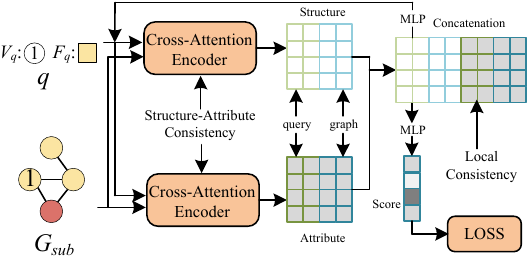}
  \vspace{-5mm}
  \caption{Illustration of \aggname}
  \label{fig:agg_framework}
\vspace{-6mm}
\end{figure}

\begin{example}
An example of the node-attribute bipartite graph   {\color{black} using the graph in Figure~\ref{fig:case}} is depicted in Figure~\ref{fig:fusion_graph}. The original graph contains 10 nodes with 6 distinct attributes. Therefore, there are 10 nodes on the $U$ side and 6 nodes on the $L$ side. The attributes of node 4 contain \textit{AI} and \textit{DB}. Hence, there is an edge between node 4 and node \textit{AI}, and an edge between node 4 and node \textit{DB}.
\end{example}

By utilizing the bipartite graph, nodes in the original graph that are with the same attributes share a common neighbor in the bipartite graph. Additionally, attribute nodes with similar neighborhoods may possess similar semantics. To measure the cohesiveness of the subgraph in the bipartite graph, we resort to bipartite modularity:

\begin{definition} (Bipartite Modularity).
Given a bipartite graph $G(V=(U,L), E)$ and a community $C(V_C=(U_C,L_C), E_C) $, the bipartite modularity is defined as follows~\cite{kim2022abc}:
\begin{equation}
    \label{equ:bipartite_modularity}
     \textrm{BM}(G, C)=\frac{1}{\left| E\right|}(2\left| E_C\right|-\frac{d_C^Ud_C^L}{\left| E\right|})
\end{equation}
where $d_C^U$ is the sum of degrees of the nodes in the $U$ side of $C$ and $d_C^L$ is the sum of degrees of the nodes in the $L$ side of $C$.
\end{definition}

Similar to structure-based pruning, we present an attribute-based pruning approach that leverages the bipartite modularity to identify semantically-similar nodes. Given the query attribute nodes in the bipartite graph, {\color{black} the attribute-based candidate nodes are ascertained based on the subgraph induced by the \ksize-hop neighbors of the query attribute nodes possessing the largest bipartite modularity. Note that, only graph nodes in the induced subgraph are selected since the final output community is a set of graph nodes.}

\vspace{1mm}
\noindent
{\bf The candidate subgraph extraction algorithm.}
The overall modularity-based candidate subgraph extraction method is shown in Algorithm~\ref{algo:k-hop modularity}. The algorithm takes the attributed graph and the query as inputs and produces the candidate subgraph. Initially, it performs structure-based pruning (lines 1 to 7). It first designates the query nodes as the candidate subgraph and the maximum modularity as negative infinity during initialization (line 1). {\color{black} Then, it expands outward hop by hop until all nodes are taken into consideration (lines 2 to 7). For each $k$, if the density sketch modularity of the induced subgraph formed by the $k$-hop neighbors of the query nodes exceeds the previous maximal value, the candidate subgraph is expanded to encompass the $k$-hop neighbors (lines 3 to 7)}. Next, the algorithm performs attribute-based pruning (lines 8 to 15). It begins by creating the node-attribute bipartite graph (line 8) and maps the query attributes into attribute nodes in the bipartite graph (line 9). Lines 10 to 15 are analogous to lines 2 to 7, with the exception that only the nodes on the $U$ side are selected (line 15). 
{\color{black}{At last, the algorithm outputs the candidate subgraph $G_{sub}$ induced from the selected candidates $V_{sub}$ (lines 16 and 17).}}

\vspace{-3mm}
\section{Consistency-aware Net}
\label{sec:CoNet}

\begin{algorithm}[t]
\caption{Forward Propagation of \aggname}
\label{algo:forwardpropagation}
\LinesNumbered
\DontPrintSemicolon
\KwIn{The qeury $q=<V_q,F_q>$, candidate subgraph $G_{sub}$. }
\KwOut{The predicted community $\Tilde{C}_q$.}

$H_{v_q}^{(0)}, H^{\left(s,0\right)}, H_{f_q}^{(0)}, H^{(a,0)}\leftarrow$  feature initialization\;

\For{ $k=0,\cdots, K-1$}{
{\color{black}$X_q, X_k, X_v =  H_{v_q}^{(k)} W_q^{(s,k)}, H^{\left(s,k\right)} W_k^{(s,k)}, H^{\left(s,k\right)} W_v^{(s,k)}$}\;
$ H_{v_q}^{(k+1)} = \textrm{softmax}(\frac{X_qX_k^T}{\sqrt{d}}) X_v$\;

    \For{$v \in V(G_{sub})$}{
        $h_v^{(s, k+1)}={\color{black}\textrm{MLP}^{(s,k)}}\left ( \left ( 1+\epsilon ^{(k)} \right )\cdot h_v^{(s, k)}+ {\textstyle \sum_{v\prime\in N(v)}}h_v\prime^{(s, k)}   \right ) $
    }
{\color{black}$X_q, X_k, X_v =  H_{f_q}^{(k)} W_q^{(a,k)}, H^{\left(a,k\right)} W_k^{(a,k)}, H^{\left(a,k\right)} W_v^{(a,k)}$}\;
$ H_{f_q}^{(k+1)} = \textrm{softmax}(\frac{X_qX_k^T}{\sqrt{d}}) X_v$\;

    \For{$v \in V(G_{sub})$}{
        $h_v^{(a, k+1)}={\color{black}\textrm{MLP}^{(a,k)}}\left ( \left ( 1+\epsilon ^{(k)} \right )\cdot h_v^{(a, k)}+ {\textstyle \sum_{v\prime\in N(v)}}h_v\prime^{(a, k)}   \right ) $
    }
$H^{(s)} = H_{v_q}^{(k+1)} || H^{(s, k+1)}, H^{(a)} = H_{f_q}^{(k+1)} || H^{(a, k+1)}$\;
$H_{v_q}^{(k+1)}= {\color{black}\textrm{MLP}^{(v_q, k)}}(H^{(s)} || H^{(a)})$\;
$H_{f_q}^{(k+1)}={\color{black}\textrm{MLP}^{(f_q, k)}}(H^{(s)} || H^{(a)})$\;
}

$\Tilde{C}_q = \textrm{MLP}(H^{(s)} || H^{(a)})$\;

\Return $\Tilde{C}_q$\;
\end{algorithm}

Based on the obtained substructure {\color{black}$G_{sub}$} and the input query $q=<V_q, F_q>$, a consistency-aware net, namely \aggname, is designed to predict the community. 
The overall architecture is illustrated in Figure~\ref{fig:agg_framework} and Algorithm~\ref{algo:forwardpropagation}. \aggname~incorporates three main components, including 1) Cross-attention encoder, 2) Structure-attribute consistency, and 3) Local consistency. \aggname~ runs for $K$ layers and uses two cross-attention encoders to encode the structure and attribute information of both the query and data graph into the latent space (lines 2 to 13). Both structure and attribute representations are concatenated and fed into an MLP for predicting the community (line 14). 
While learning the representation, two consistency constraints are employed to guide the training. The structure-attribute consistency constraint aims to obtain consistent representations for the structure and for the attribute, while the local consistency constraint aims to achieve an aligned prediction result for neighboring nodes. 
In the following subsections, we describe these main components in detail.

\vspace{-2mm}

\subsection{Feature Initialization}
As the GNN model needs vectorized inputs, we introduce a vectorization technique for the structure iuput and attribute input.

\noindent\textbf{Structure Input.} The query node set $V_q$ is encoded as a one-hot vector {\color{black}$H_{v_q}^{(0)} \in \{0,1\}^{|V_{sub}|}$} where the $i$-th bit equals to one if $v_i \subseteq V_q$. For example, the query node $v_2$ is encoded as $[0, 1, 0, 0, 0, 0, 0, 0, 0, 0]^T$ for the graph in Figure~\ref{fig:case}. As the size of the candidate subgraph is small, the length of the one-hot vector is also small. 

\noindent\textbf{Attribute Input.} The query attribute set $F_q$ is encoded as a one-hot vector {\color{black}$H_{f_q}^{(0)} \in \{0,1\}^{|V_{sub}|}$} where the $i$-th bit equals to one if there exist an attribute $f_j$ while $f_j  \subseteq F_q $ and $f_j  \subseteq F_i $. Each bit indicates the relevance of node attributes and query attributes.  For example, given the query attribute set $\{DB\}$ and graph in Figure~\ref{fig:case}, the query attribute set is encoded as $[0, 1, 0, 1, 0, 1, 0, 0, 0, 0]^T$. The input of the candidate subgraph is the stack of the feature of each node. {\color{black}Note that although different subgraphs of equal length may share an initial query node/attribute representation, their final representation will differ since the GNN model propagates the representation through different edges in different subgraphs.}

\vspace{-2mm}

\subsection{Cross-Attention Encoder}

\begin{figure}[t]
  \centering
  \includegraphics[width=0.75\linewidth]{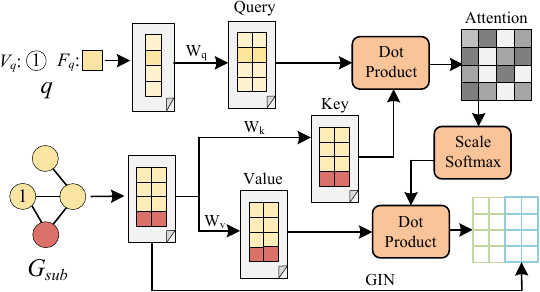}
  \vspace{-3mm}
  \caption{Illustration of Cross Attention Encoder}
  \label{fig:cae}
\vspace{-6mm}
\end{figure}

The query and data graph are not isolated entities, but instead, have a high correlation w.r.t. the resulting community. 
This interaction plays a crucial role in learning query-dependent embeddings. Encoding the two inputs separately would lead to a lack of information exchange between them, resulting in indistinct representations and diminishing the accuracy.
To capture such interaction, we design a cross-attention encoder that leverages a cross-attention mechanism to learn the embeddings. 
The overall illustration of the cross-attention encoder is shown in Figure~\ref{fig:cae}. 

In the retrieval system, elements are stored in a ``key-value'' pair format where the key serves as the identifier for the corresponding value which can be a large file. When a query is submitted, the system compares the query with each key stored in the system. If a key matches the query, the corresponding value is retrieved and returned by the system.
The design of cross-attention follows this architecture of the \textit{query-key-value} retrieval. 

The cross-attention encoder is utilized to encode both structure and attribute information. Specifically, we showcase the utilization of the cross-attention encoder for the encoding of structure information in layer $k$.
The structure inputs consist of the query nodes $H_{v_q}^{(k)}$ and the graph $H^{\left(s,k\right)}$. The query goes through a linear transformation layer which is parameterized by a weight matrix ${\color{black}W_q^{(s,k)}} \in \mathbb{R}^{d_k \times d_{k+1}}$ where $d_k$ is the dimension of the hidden vector of layer $k$. 
Similarly, we project the graph into the latent space by two weight matrices ${\color{black}W_k^{(s,k)}, W_v^{(s,k)}} \in \mathbb{R}^{d_k \times d_{k+1}}$ for key and value. {\color{black}We use superscript "s" for structure-related components and "a" for attribute-related components.}

\begin{equation}
    X_q =  H_{v_q}^{(k)} W_q^{(s,k)}, \ X_k =  H^{\left(s,k\right)} W_k^{(s,k)}, \ X_v = H^{\left(s,k\right)} W_v^{(s,k)}
\end{equation}

Next, to calculate the similarity between the query and key, we perform a dot-product operation, followed by scaling the result by the square root of the dimension, and applying a softmax function to normalize the resulting vector. This produces an attention matrix {\color{black}$X \in \mathbb{R}^{{|V_{sub}|} \times {|V_{sub}|}}$}, where each element $x_{ij} \in X$ represents the correlation between nodes $v_i$ of query and $v_j$ of data graph. Next, we use the dot product of $X$ and $X_v$ to obtain the query representation that combines both information from the query and data graph.
\begin{equation}
    X = \textrm{softmax}(\frac{X_qX_k^T}{{\color{black}\sqrt{d_{k+1}}}}),\ H_{v_q}^{(k+1)} = X X_v
\end{equation}

In addition to encoding the query representation, we also employ GNN to obtain the data graph representation and subsequently concatenate the two representations for community prediction. In this paper, we use Graph Isomorphism Network (GIN)~\cite{DBLP:conf/iclr/XuHLJ19} as the backbone, which has an excellent structure-preserving ability and has been utilized in various graph analytic tasks such as subgraph counting~\cite{wang2022neural, DBLP:conf/icde/WangZQWZL22} and graph classification~\cite{maron2019provably}. The formula of GIN of cross-attention encoder for structure encoding is as follows:

\begin{equation}
    \label{equ:GIN}
    h_v^{(s, k+1)}={\color{black}\textrm{MLP}^{(s,k)}}\left ( \left ( 1+\epsilon ^{(k)} \right )\cdot h_v^{(s, k)}+ {\textstyle \sum_{v\prime\in N(v)}}h_v\prime^{(s, k)}   \right ) 
\end{equation}
where $h_v^{(s, k)} \in H^{(s, k)}$ is the latent representation of node $v$ in layer $k$, $N(v)$ is the neighbor set of $v$ and $\epsilon ^{(k)}$ is a learnable parameter in layer $k$. An MLP is utilized to learn to combine the information from the neighborhood and the previous layer. The result is concatenated with the query representation to obtain the structure representation:


\begin{equation}
    \label{equ:structure representation}
    H^{(s)} = H_{v_q}^{(k+1)} || H^{(s, k+1)}
\end{equation}
where $||$ is the concatenate operation. Similarly, we can get the attribute representation $H^{(a)}$. The encoder runs for $K$ layers, and each layer uses an MLP to pass query information to the next layer:

\begin{equation}
    H_{v_q}^{(k+1)}= {\color{black}\textrm{MLP}^{(v_q, k)}}(H^{(s)} || H^{(a)})
\end{equation}
And similarly, we can get the $H_{f_q}^{(k+1)}$.

The cross-attention encoder is based on the formula of the scaled dot-product attention from the self-attention mechanism~\cite{vaswani2017attention}, but with two key differences: 1) There are two different sequences of vectors are used for the cross-attention encoder, while only one sequence of vectors is used for the self-attention mechanism; 2) The self-attention aims to capture the relationship between itself and data in the datasets, while the cross-attention encoder focuses on capturing the relationship between the query and the data graph.

\subsection{Structure-Attribute Consistency}

The current non-learning-based approaches for ACS consider the structure and attribute separately, whereas AQD-GNN, a learning-based method, fuses these representations using a feature fusion operator. However, AQD-GNN falls short in considering the correlation between query nodes and query attributes, as it only concatenates these representations. 
Although structures and attributes come from two heterogeneous spaces, they are a paired sample to describe a node, and thus they are close-related to each other and should be in close proximity to each other in the latent space. Each node's representation is a point in the latent space, and representations of multiple nodes form a distribution~\cite{gao2021unsupervised}. In this paper, we introduce a structure-attribute consistency constraint that aims to minimize the discrepancy between the distribution of structure representation and the distribution of attribute representation from the perspective of the whole graph. 


There have been many metrics to measure the discrepancy between two distributions, such as the Kullback-Leibler (KL) divergence~\cite{kullback1951information} and the Jensen-Shannon (JS) divergence~\cite{manning1999foundations}. In this paper, we use the Wasserstein distance~\cite{villani2009optimal}. It has a better property since two distributions converge under Wasserstein distance
while failing to exhibit convergence under KL and JS divergences in some cases~\cite{DBLP:conf/iclr/ArjovskyB17, arjovsky2017wasserstein, wang2022neural}. The Wasserstein distance is defined: 

\begin{definition}{(Wasserstein Distance).}
Given random variables $\mu$ and $\nu$ that are subject to probability distributions $\mathbb{P}_s$ and $\mathbb{P}_a$, the Wasserstein-1 distance $W_1$ between distributions $\mathbb{P}_s$ and $\mathbb{P}_a$ is defined:
\label{def:W_distance}
\begin{equation}
    \label{equ:W_distance}
    W_1(\mathbb{P}_s, \mathbb{P}_a) = \inf_{\gamma\in \pi (\mathbb{P}_s, \mathbb{P}_a)} \mathbb{E}_{(\mu, \nu)\sim\gamma}[||\mu-\nu||] 
\end{equation}
where $\pi (\mathbb{P}_s, \mathbb{P}_a)$ denotes the set of all joint distributions whose
marginals are $\mathbb{P}_s$ and $\mathbb{P}_a$ respectively.
\end{definition}

In this definition, $\gamma$ is the “mass” needed to be transported from
$\mu$ to $\nu$ to transform the distributions $\mathbb{P}_s$ into the distribution $\mathbb{P}_a$. And in this case, the element $\gamma_{i,j}$ is the probability that $h_{v_i}^{(s)}\in H^{(s)}$ matches $h_{v_j}^{(a)}\in H^{(a)}$.
As the infimum in Equation~\ref{equ:W_distance} is highly intractable. The Wasserstein-1
distances can be reformulated with the Kantorovich-Rubinstein duality~\cite{gozlan2017kantorovich}:

\begin{equation}
    \label{equ:W_distance_sup}
    W_1(\mathbb{P}_s, \mathbb{P}_a) = \sup_{||f_w||_L \le 1}\mathbb{E}_{\mu \sim \mathbb{P}_s}[f_w(\mu)] -\mathbb{E}_{\nu \sim \mathbb{P}_a}[f_w(\nu)] 
\end{equation}
where $f_w$ satisfies the 1-Lipschitz condition that map the $\mu$, $\nu$ in variable space to the real space $\mathbb{R}$.

By clamping the weights of $f_w$ to a fixed box~\cite{arjovsky2017wasserstein, wang2022neural}, the Wasserstein distance is minimized when $f_w$
is optimized to minimize:

\begin{equation}
\label{equ:wl_distance}
    \mathcal{L}_w(H^{(s)},H^{(a)}) = \sum_{h_v^{(a)}\in H^{(a)}}f_w(h_v^{(a)})-\sum_{h_u^{(s)}\in H^{(s)}}f_w(h_u^{(s)})
\end{equation}

\vspace{-2mm}
\subsection{Local Consistency}

Existing learning-based community search and attributed community search models have been implemented by modeling the problem as a binary node classification task over the entire graph, which can be flawed as nodes within a real-world community are not isolated but rather cohesively connected as a unified module. Hence, solely relying on the gradient signal from the node classification task may not be sufficient. 
To alleviate this issue, we introduce the local consistency for the ACS to enhance the link between nodes in the predicted community in this section.

In particular, the structure representation and the attribute representation are first concatenated before sending to an MLP for computing the final community score:

\begin{equation}
\label{equ:final_MLP}
    H = H^{(s)} || H^{(a)} 
\end{equation}

We then minimize the loss between the self-multiplication of the concatenated matrix and the adjacency matrix as follows:

\begin{equation}
\label{equ:inter-consistency}
    \mathcal{L}_m(H,A) = \left|\left| A-HH^T\right|\right|_F
\end{equation}
where A is the adjacency matrix, and $\left|\left| \cdot\right|\right|_F$ denotes the Frobenius norm of the matrix~\cite{bottcher2008frobenius}.
The auxiliary link prediction objective $\mathcal{L}_m(\cdot)$ captures the idea that neighboring nodes should be predicted together, based on the intuition that if one node belongs to a community, its neighbors are also likely to belong to the same community, and vice versa.

\subsection{Learning Objectives}

There are three learning objectives in \aggname. The first one is in Equation~\ref{equ:wl_distance} that aims to preserve the structure-attribute consistency via minimizing the Wasserstein distance between the distribution of structure and the distribution of attribute. The second one is in Equation~\ref{equ:inter-consistency} which aims to maintain local consistency to enhance the link between nodes in the community. And the third loss uses the binary cross entropy (BCE) which aims to minimize the difference between the predicted community and the ground-truth community. For a query $q_i$, the predicted community score is denoted as $\Tilde{C}_{q_i} \in [0,1]^{|V_{sub}|}$ and the ground-truth community is denoted as $C_{q_i} \in \{0,1\}^{|V_{sub}|}$,  the BCE loss is defined as:


\begin{equation}
\label{equ:bce}
    \mathcal{L}_b(\Tilde{C}_{q_i},C_{q_i}) = \sum_{j=1}^{|V_{sub}|} -\left(C_{q_i, j}log(\Tilde{C}_{q_i, j})+(1-C_{q_i, j})log(1-\Tilde{C}_{q_i, j})\right)
\end{equation}
where $C_{q_i,j}$ is the $j$-th bit of $C_{q_i}$.

The task of ACS is then modeled as multi-task learning to take the above three losses into account together. The overall loss function is defined as:

\begin{equation}
\label{equ:loss_func}
    \mathcal{L} = \mathcal{L}_b+\alpha\mathcal{L}_w+\beta\mathcal{L}_m
\end{equation}
where $\alpha, \beta \in [0,1]$ are the coefficients to balance the above three loss functions. {\color{black} Note that $\mathcal{L}_w$ and $\mathcal{L}_m$ are two unsupervised losses that do not require the ground-truth labels. This property allows $\mathcal{L}_w$ and $\mathcal{L}_m$ to be easily adaptable to various scenarios, such as incomplete or poor ground-truth data, thereby enhancing the generalization ability and robustness of \textit{ALICE}. }
\subsection{Analysis and Discussion}

We now analyze the expressive power and the structure-preserving ability of the cross-attention encoder. We then further discuss the time complexity of \netname. 

\vspace{1mm}
\noindent\textbf{Expressive Power and Structure-Preserving Ability.} Here we prove that the proposed graph neural network of \aggname~is as powerful as the Weisfeiler-Lehman (WL) isomorphism test. 
\begin{lemma}
    \label{lemma:GIN_power}
There exist parameters for $K$-layered GINs such that, for any positive integer $K$, if the degrees of nodes are bounded by a constant and the size of node features is finite, and for any graphs $g_1$ and $g_2$, if the 1-WL algorithm outputs that $g_1$ and $g_2$ are not isomorphic within $K$ rounds, then the embeddings of $g_1$ and $g_2$ computed by the GIN are distinct.
\end{lemma}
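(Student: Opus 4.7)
The plan is to proceed by induction on the layer index $k$ and show that one layer of the GIN defined in Equation~\ref{equ:GIN} can simulate one round of 1-WL color refinement under the stated finiteness assumptions. The statement then follows by composing $K$ such layers: if the multisets of node colors produced by 1-WL on $g_1$ and $g_2$ differ after some $k \le K$ rounds, the corresponding multisets of GIN embeddings at layer $k$ will also differ, and any permutation-invariant readout over these multisets yields distinct graph-level embeddings.

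First I would make the state space explicit. Because node features lie in a finite set and node degrees are bounded by a constant, the set of pairs $(c, S)$ that can arise in any round, where $c$ is a center color and $S$ is the multiset of neighbor colors, is finite. In particular, the set of 1-WL colors appearing within $K$ rounds is a finite set $\mathcal{C}_K$, and there exists an injection $\phi \colon \mathcal{C}_K \to \mathbb{R}$.

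Second I would establish the key injectivity lemma for the sum aggregator: there exist $\epsilon \in \mathbb{R}$ and a map $h \colon \mathcal{C}_K \to \mathbb{R}$ such that the function
\begin{equation*}
(c, S) \;\longmapsto\; (1+\epsilon)\,h(c) + \sum_{x \in S} h(x)
\end{equation*}
is injective on all pairs $(c, S)$ that can appear. This is the Xu--Hu--Leskovec--Jegelka style argument: since $|S|$ is bounded and $\mathcal{C}_K$ is finite, one can choose $h$ taking sufficiently separated values (e.g., powers of a large base) and pick $\epsilon$ to avoid the finitely many collisions between $(1+\epsilon)h(c_1)+\sum S_1$ and $(1+\epsilon)h(c_2)+\sum S_2$. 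Having secured injectivity of the aggregator, I would invoke the universal approximation theorem to assert that $\textrm{MLP}^{(s,k)}$ can implement the composition of this aggregator with any chosen injection from its finite image into $\mathbb{R}^{d_{k+1}}$, thereby realizing one step of the 1-WL hash exactly on the relevant finite domain.

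Finally, the inductive step: assume that nodes with distinct 1-WL colors after $k$ rounds have distinct GIN embeddings at layer $k$. Then the inputs to the layer-$(k{+}1)$ aggregator, namely pairs (own embedding, multiset of neighbor embeddings), are in bijection with the corresponding 1-WL pairs $(c, S)$ after round $k$; applying the injective layer preserves this bijection. Hence after $K$ layers, nodes receive distinct embeddings whenever they receive distinct 1-WL colors, so if 1-WL separates $g_1$ and $g_2$ within $K$ rounds the multisets of layer-$K$ node embeddings differ, and so do the graph-level embeddings produced by the readout. The main obstacle I expect is the combinatorial bookkeeping in the injectivity step: one must simultaneously choose $\epsilon$ and $h$ so that no collision occurs across all finitely many pairs $(c, S)$ encountered within $K$ rounds, and argue that the per-layer scalar $\epsilon^{(k)}$ afforded by the architecture is flexible enough to carry this through at every depth. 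The finiteness guarantees in the lemma hypothesis are precisely what makes this choice possible.
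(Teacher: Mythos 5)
Your argument is correct and is essentially the standard GIN expressiveness proof of Xu et al.~\cite{DBLP:conf/iclr/XuHLJ19} (injective sum aggregation over a finite color domain, universal approximation for the MLP, induction over layers); the paper itself offers no proof and simply defers to that reference, so your reconstruction matches the argument the paper relies on. The only slight imprecision is that the graph-level readout must be injective on multisets of node embeddings (e.g., a sum composed with an injective map), not merely permutation-invariant, for distinct node-embedding multisets to yield distinct graph embeddings.
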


The proof of LEMMA~\ref{lemma:GIN_power} can be found in ~\cite{DBLP:conf/iclr/XuHLJ19}.

\begin{lemma}
    \label{lemma:SAN_power}
    If the 1-WL algorithm outputs that $g_1$ and $g_2$ are not isomorphic within $K$ rounds, the embeddings of $g_1$ and $g_2$ computed by the cross-attention encoder are different.
\end{lemma}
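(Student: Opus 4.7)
The plan is to reduce the statement directly to Lemma~\ref{lemma:GIN_power} by exhibiting a parameter setting under which the structure branch of the cross-attention encoder specializes to a pure GIN, and then arguing that the additional components (cross-attention, attribute branch, feed-forward mixing) cannot erase the information produced by that GIN branch before the final readout. First I would isolate, inside Algorithm~\ref{algo:forwardpropagation}, the recurrence in lines~5--6: it is syntactically identical to the GIN update $h_v^{(s,k+1)} = \mathrm{MLP}^{(s,k)}\!\left((1+\epsilon^{(k)})\cdot h_v^{(s,k)} + \sum_{v'\in N(v)} h_{v'}^{(s,k)}\right)$ covered by Lemma~\ref{lemma:GIN_power}. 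Hence there exists a choice of $\{\mathrm{MLP}^{(s,k)},\epsilon^{(k)}\}_{k<K}$ such that, for any $g_1,g_2$ distinguished by $1$-WL within $K$ rounds, the multisets $\{h_v^{(s,K)}:v\in V(g_1)\}$ and $\{h_v^{(s,K)}:v\in V(g_2)\}$ differ.

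Next I would trace how $H^{(s,K)}$ survives into the encoder's output. By Equation~\ref{equ:structure representation}, the structure representation is $H^{(s)} = H_{v_q}^{(K)}\Vert H^{(s,K)}$, and the final readout concatenates $H^{(s)}$ with its attribute counterpart $H^{(a)}$ before the last MLP (line~14 of Algorithm~\ref{algo:forwardpropagation}). Since concatenation is injective on its individual coordinate blocks, two inputs that disagree on the $H^{(s,K)}$ coordinates must disagree on the concatenated vector; consequently, for a sufficiently wide final MLP (or simply for the pre-MLP embedding, which is what Lemma~\ref{lemma:SAN_power} quantifies) the two graph embeddings are distinct. To make the argument clean, I would choose the cross-attention weights $W_q^{(s,k)},W_k^{(s,k)},W_v^{(s,k)}$ and the entire attribute branch to be zero (or any fixed constant map), so that the $H_{v_q}^{(K)}$ and $H^{(a)}$ coordinates are identical for $g_1$ and $g_2$, isolating the distinguishing signal in the GIN-structure coordinates.

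A subtlety I would address explicitly is that Lemma~\ref{lemma:SAN_power} is an existential statement about the encoder's expressive power, not a statement that \emph{every} parameter setting distinguishes non-isomorphic inputs; this matches the form of Lemma~\ref{lemma:GIN_power} and allows us to pick weights freely. I would also briefly note that the feed-forward updates in lines~12--13 of Algorithm~\ref{algo:forwardpropagation} operate on the concatenated representations produced at layer $k$ and only feed back into the \emph{query} streams $H_{v_q}^{(k+1)},H_{f_q}^{(k+1)}$; the structure-graph stream $H^{(s,k+1)}$ itself is still updated by the GIN recurrence alone, so the specialization to a plain GIN is internally consistent across all $K$ layers.

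The main obstacle, which is more bookkeeping than technical, is verifying that the cross-attention block and the attribute branch cannot, under the chosen weights, inject information into the GIN stream that destroys the $1$-WL-aligned coloring. Inspecting Algorithm~\ref{algo:forwardpropagation} shows that lines~3--4 write only into $H_{v_q}^{(k+1)}$, lines~7--10 write only into $H_{f_q}^{(k+1)}$ and $H^{(a,k+1)}$, and line~12--13 update only the query streams; the structure-graph update in lines~5--6 depends solely on $H^{(s,k)}$, so the specialization is clean and the lemma follows from Lemma~\ref{lemma:GIN_power}.
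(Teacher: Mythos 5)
Your proposal is correct and follows essentially the same route as the paper's own proof: invoke Lemma~\ref{lemma:GIN_power} to get distinct GIN embeddings for the structure-graph stream, then observe that concatenating a distinguishing block with any other blocks preserves distinctness of the encoder output. Your version is simply more careful than the paper's (explicitly fixing the cross-attention and attribute-branch parameters and checking that no other component writes into the GIN stream), but no new idea is needed.
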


\begin{proof}
The output of the cross-attention encoder concatenates two sources of embedding,i.e., the query $H_q$ and data graph $H_g$, and the GIN is used to encode the data graph. Given $g_1$ and $g_2$, and its outputs are $H_{1}=H_{q_1} || H_{g_1}$ and $H_{2}=H_{q_2} || H_{g_2}$. If $g_1$ and $g_2$ are "non-isomorphic" within $K$ round, $H_{g_1}$ and $H_{g_2}$ which are the output of $g_1$ and $g_2$ by GIN should be different by LEMMA~\ref{lemma:GIN_power}. Hence $H_{1}$ and $H_{2}$ must be different since two different vectors concatenating any vectors will result in two different vectors. Therefore, the cross-attention encoder is as powerful as 1-WL test.
\end{proof}


\vspace{-2mm}
\noindent\textbf{Complexity Analysis.} The time complexity of \netname~consists of the time cost of candidate subgraph extraction and the time cost for \aggname. 
The time complexity of candidate subgraph extraction is $O(|E|+2\times|V|\times|F^d|)$ since it needs to construct the node-attribute bipartite graph and propagates through each edge to find the \ksize-hop neighbors in the data graph and the bipartite graph.
\aggname~ needs to run for $K$ layers and is trained for $t$ epochs, and \aggname~ is applied in the candidate subgraph $G_{sub}=(V_{sub}, E_{sub})$. 
The time complexity of the projection of three matrices is $O(3\times|V_{sub}|\times d^2)$ where $d$ is the maximum latent dimension. The dot product of query and key takes $O(|V_{sub}|^2\times d)$ and the dot product of attention and value also takes $O(|V_{sub}|^2\times d)$. 
The time complexity of GIN applied in the candidate subgraph is $O(|E_{sub}|)$~\cite{wang2022neural}. 
Generally, $|F_d|\approx |V_{sub}|$, hence we assume the time expended for attribute encoding closely approximates that of structure encoding, and the time required for each layer is close.
Therefore, the overall time complexity of \aggname~is 
$O(t\times K\times 2\times (3\times |V_{sub}|\times d^2 + 2\times|V_{sub}|^2\times d + |E_{sub}|))$.





\vspace{-1mm}
\section{Experimental Evaluation}
\label{sec:experiment}

In this section, we evaluate the performance of \netname~ compared with a variety of existing solutions over 11 real-world benchmark datasets with a maximum of 65 million nodes and 1.8 billion edges. 

\vspace{-3mm}
\subsection{Dataset Description}

\begin{table}
\centering
\caption{Statistics of the datasets}
\vspace{-4mm}
\label{tab:dataset}
\scalebox{0.85}{
\begin{tabular}{c||c|c|c|c}
\hline

\cellcolor{lightgray}{Dataset}  & \cellcolor{lightgray}{{$\left| {V} \right|$}} &\cellcolor{lightgray} {{$\left| {E} \right|$}} &\cellcolor{lightgray} {{$| {F^d} |$}} &\cellcolor{lightgray} {{$ N_c $}} \\ \hline
\textit{Texas} & 187   & 279 & 1703 & 5   \\ \hline
\textit{Cornell} & 195   & 285 & 1703 & 5  \\ \hline
\textit{Washt} & 230   & 392 & 1703 & 5 \\ \hline
\textit{Wiscs} & 265   & 469 & 1703 & 5  \\ \hline
\textit{Cora} & 2708   & 5429  & 1433 & 7  \\ \hline
\textit{Citeseer} & 3312   & 4715 & 3703 & 6  \\ \hline
\textit{Google+} & 7856   & 321,268 & 2024  & 91  \\ \hline
\textit{PubMed} & 19,717   & 44,324 & 500 & 3  \\ \hline
\textit{Reddit} & 232,965   & 47,396,905  & 602 & 41  \\ \hline
\textit{Orkut}     & 3,072,627 & 117,185,083  & 1000 & 5000 \\ \hline

\textit{Friendster }  & 65,608,366   & 1,806,067,135   & 1000 & 5000\\ \bottomrule

\end{tabular}}
\vspace{-4mm}
\end{table}

We use 11 public datasets following the previous research~\cite{jiang2022query,gao2021ics,fang2016effective,huang2017attribute} to conduct the experiments. 
The statistics information is summarized in Table~\ref{tab:dataset} where {\color{black} $\left| {V} \right|$ is the number of nodes, $\left| {E} \right|$ denotes the number of edges, $| {F^d} |$ is the number of distinct attributes, and }$N_c$ is the number of communities in the graph. The first nine datasets, including \textit{Texas}, \textit{Cornell}, \textit{Washington (Washt)}, \textit{Wisconsin (Wiscs)}, \textit{Cora}, \textit{Citeseer}, \textit{Google+}, \textit{Pubmed}, and \textit{Reddit}, are attributed graphs with ground-truth communities. To further test the scalability and efficiency of the approaches over large graphs, we add two non-attributed datasets, \textit{Orkut} and \textit{Friendster}, that contain 5000 top-quality ground-truth communities. We generate an attribute pool consisting of $| {F^d} | = 1000$ different attributes for these two graphs following the processing phase in~\cite{huang2017attribute}.

\vspace{-3mm}
\subsection{Experimental Setup}
\vspace{1mm}
\noindent\textbf{Baseline:} We use three baselines for ACS including: 1) ACQ~\cite{fang2016effective}, which is a non-learning $k$-core-based model; 2) ATC~\cite{huang2017attribute}, which is a non-learning $k$-truss-based model; 3) AQD-GNN~\cite{jiang2022query}, which is a learning-based model with feature fusion. We also test the performance of \netname~compared with ICS-GNN~\cite{gao2021ics}, which is a GNN-based interactive community search model. Two baselines are also used for the comparison of non-attributed community search including 1) CTC~\cite{DBLP:journals/pvldb/HuangLYC15}, which is a $k$-truss-based community search model; 2) \ksize-ECC~\cite{chang2015index}, which models communities as $k$-edge connected components.

\vspace{1mm}
\noindent\textbf{Query Setting:} {\color{black}We categorize all ground-truth communities into three distinct groups: training communities, validation communities and test communities. The ratio of these groups is approximately 5:1:4. This process serves to evaluate the performance when the model is exposed to previously unseen communities.} And then, we generate 150 pairs of training data as $\mathcal{D}_{train}=\{q_i, C_{q_i}\}_{i=1}^{150}$. Each query  $q_i=<V_{q_i}, F_{q_i}>$ contains the query node set $V_{q_i}$ and the query attribute set $F_{q_i}$. $C_{q_i}$ is the corresponding ground-truth community of $q_i$ in the training communities. We then generate 100 pairs of validation queries and 100 pairs of test queries. {\color{black} We only split those datasets that have larger than 10 ground-truth communities (i.e., \textit{Google+, Reddit, Orkut, and Friendster}) to avoid insufficient information on ground-truth communities}.
The training data is utilized to train our model, the validation data is used to select the optimal threshold of the predicted score to determine the community, and the test data is employed to evaluate the performance. We randomly select $1\sim 3$ nodes from the ground-truth community as the query nodes of each query. In addition, we use the following three mechanisms to generate the query attributes.
\begin{itemize}[leftmargin=15pt, topsep=1pt]
\item{} \textbf{Empty attribute query (EmA).} We set the attribute query set empty $F_{q_i}=\varnothing$, and query in the EmA is $q_i=<V_{q_i}, \varnothing>$.
\item{} \textbf{Attribute from communities (AFC).} We first select the 5 most common attributes in ground-truth communities and then randomly select one of these attributes as the query attribute. Query in the AFC query set is $q_i=<V_{q_i}, F_{q_i}^c>$.
\item{} \textbf{Attribute from the query node (AFN).} We directly use the attribute from the query nodes as the query attribute. Query in the AFN query set is $q_i=<V_{q_i}, F_{q_i}^n>$.
\end{itemize}

\vspace{1mm}
\noindent\textbf{Metrics:} {\color{black}Combining metrics used in existing works\cite{jiang2022query, fang2016effective, huang2017attribute}, we use three widely used metrics to evaluate the quality of the found communities, including F1-score~\cite{jiang2022query, huang2017attribute}, average degree (Avg.d)~\cite{fang2016effective}, and the Community pair-wise Jaccard (CPJ)~\cite{fang2016effective}. We evaluate the found communities within the candidate subgraph. As the primary objective of ACS is to identify a community that really matches the expectations, we mainly focus on the metric of F1-score which measures the alignment between the found communities and the ground-truth communities.} Given the ground-truth community set denoted as $\mathcal{C}=\{C_{q_1},\cdots,C_{q_t}\}$ and the predicted community sets denoted as $\Tilde{\mathcal{{C}}}=\{\Tilde{C}_{q_1},\cdots,\Tilde{C}_{q_t}\}$, the F1-score which is based on precision and recall is defined as follows. Here, $C_{q_i}$ and $\Tilde{C}_{q_i}$ are the ground-truth and predicted community vectors for query $q_i$. 

\begin{equation*}
    pre(\mathcal{C}, \Tilde{\mathcal{C}})=\frac{\sum_{i=1}^{t}\sum_{j}C_{q_{i},j}\cdot \Tilde{C}_{q_{i},j}}{\sum_{i=1}^{t}\sum_{j}\Tilde{C}_{q_{i},j}},  
    rec(\mathcal{C}, \Tilde{\mathcal{C}})=\frac{\sum_{i=1}^{t}\sum_{j}C_{q_{i},j}\cdot \Tilde{C}_{q_{i},j}}
    {\sum_{i=1}^{t}\sum_{j}{C}_{q_{i},j}}
\end{equation*}

\begin{equation*}
    F1(\Tilde{\mathcal{C}}, \mathcal{{C}})=\frac{2\cdot pre(\mathcal{C}, \Tilde{\mathcal{C}})\cdot  rec(\mathcal{C}, \Tilde{\mathcal{C}})}{ pre(\mathcal{C}, \Tilde{\mathcal{C}})+rec(\mathcal{C}, \Tilde{\mathcal{C}})}
\end{equation*}

{\color{black}
Moreover, we use Avg.d of nodes in the community to measure the structure cohesiveness and the CPJ to measure the attribute cohesiveness as in ~\cite{fang2016effective}. The definitions of Avg.d and CPJ are given:
\begin{equation*}
    Avg.d(\Tilde{\mathcal{C}})=\frac{1}{t}\sum_{i=1}^{t}\frac{1}{|\Tilde{C}_{q_{i}}|}\sum_{v\in \Tilde{C}_{q_{i}}} d(v)
\end{equation*}
\begin{equation*}
    CPJ(\Tilde{\mathcal{C}})=\frac{1}{t}\sum_{i=1}^{t}\frac{1}{|\Tilde{C}_{q_{i}}|^2}\sum_{u\in \Tilde{C}_{q_{i}}} \sum_{v\in \Tilde{C}_{q_{i}}}\frac{|F_u \cap F_v|}{|F_u \cup F_v|}
\end{equation*}
Where $d(v)$ is the degree of node $v$ in $\Tilde{C}_{q_{i}}$ and $F_v$ is the attribute set of node $v$. Attributes that share the same label are considered equivalent.
Note that, a higher value of F1-score or Avg.d or CPJ indicates the higher quality of the identified community.
}

\vspace{1mm}
\noindent\textbf{Implementation Details:} The latent dimension is set as 128. The model is trained for 300 epochs with early stopping. The loss balance coefficients $\alpha, \beta$ in Equation~\ref{equ:loss_func} are both set as 0.1. The $\gamma$ of the density sketch modularity is set as 0.8. The candidate subgraph size is set as 1000 and runs for 20 rounds in ICS-GNN. We conduct experiments using $1\sim 3$ query nodes and AFN as the query attribute by default. The dropout rate is set as 0.45. We set $k$ of the baselines to 4 by default. Adam optimizer with a decaying learning rate is employed to train the model. We conduct our experiments on a machine with Intel(R) Xeon(R) Gold 6248R CPU, 512GB memory, and Nvidia A5000 (GPU).

\subsection{Effectiveness Evaluation}
\label{Effectiveness}

Due to AQD-GNN running out of memory during the training on \textit{Reddit/Orkut/Friendster} and ATC/CTC needing more than 7 days on \textit{Friendster}, we omit these results.

\noindent\textbf{Exp-1: Attributed Community Search.}
We first evaluate our algorithm for the ACS. Figure~\ref{fig:acsresult} reports the result. We present two query scenarios: the one-node query, which employs a single query node, and the multi-node query, which utilizes 1 to 3 nodes as query nodes. Note that ACQ can only use one query node as input while ATC, AQD-GNN, and \netname~can support multiple query nodes. 
Therefore, we use ACQ as the non-learning-based baseline and AQD-GNN as the learning-based baseline for the one-node query in Figure~\ref{fig:acsresult} (a-c). We use ATC as the non-learning-based baseline and AQD-GNN as the learning-based baseline for the multi-node query in Figure~\ref{fig:acsresult} (d-f). When measured by F1-score, we observe that learning-based methods, including AQD-GNN and \netname, exhibit superior performance compared with non-learning methods under both the one-node query and the multi-node query {\color{black}with an average improvement of 41.75\% and 54.50\% in F1-score, respectively.} Overall, \netname~ shows the best performance under both cases with an average F1-score $10.18\%$ improvement compared with AQD-GNN using AFN as the query attributes. 

{\color{black}In terms of Avg.d, non-learning methods possess an inherent advantage since they adopt structure cohesiveness as the objective. Across various queries, non-learning methods consistently tend to incorporate high-degree nodes into the community to promote structure cohesiveness, which is also a contributing factor to the low F1-score. In the context of learning-based methods, \textit{ALICE} outperforms AQD-GNN. The performance stems from the local consistency that aligns the predictions of adjacent nodes, thereby enhancing the connection of nodes within a community. Additionally, the modularity-based pruning prioritizes cohesive subgraphs. When considering CPJ, we find that learning-based approaches exhibit superior performance as they can better approximate the relevance of attributes. Among 11 datasets, \textit{ALICE} outperforms baselines in 8 datasets, which validates the effectiveness of \textit{ALICE}.}

\begin{figure*}
\subfigbottomskip=-2pt 
\subfigcapskip=-6pt 
    \subfigure[\vspace{-3mm}{F1-score of ACS over one-node query}]{ 
        \centering
        \includegraphics[width=0.320\textwidth]{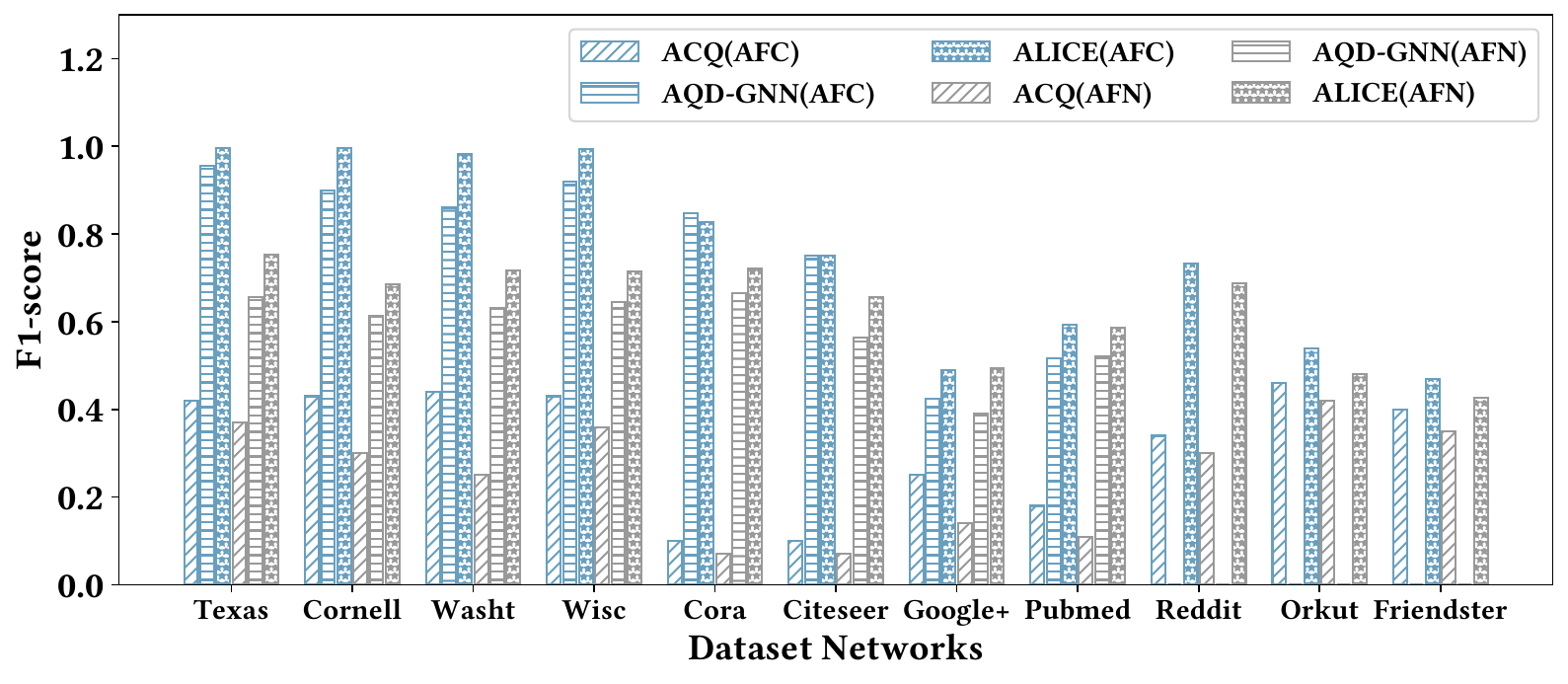}
    \vspace{-3mm}
    }
    \subfigure[\vspace{-3mm}{Avg.d of ACS over one-node query}]{ 
        \centering
        \includegraphics[width=0.320\textwidth]{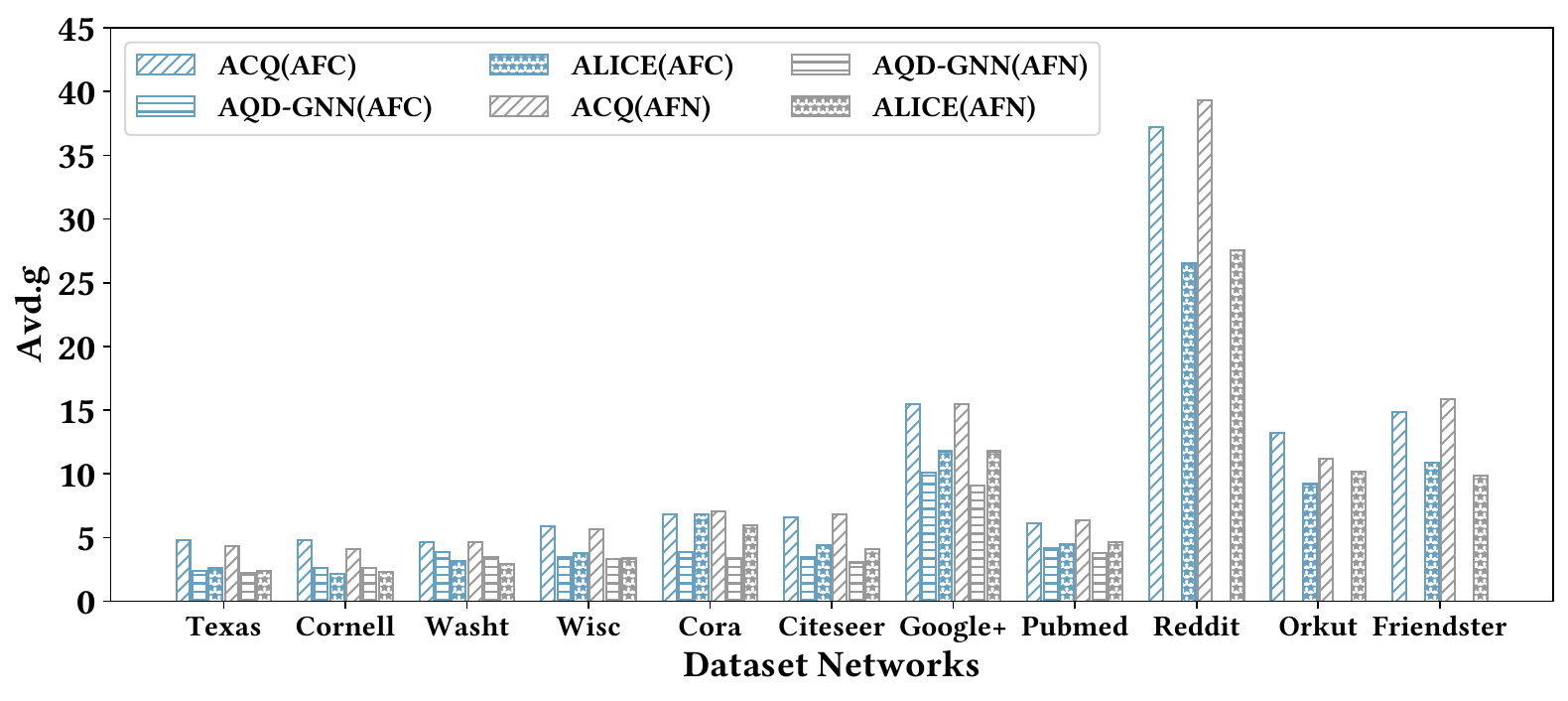}
    \vspace{-3mm}
    }
    \subfigure[\vspace{-3mm}{CPJ of ACS over one-node query}]{ 
        \centering
        \includegraphics[width=0.320\textwidth]{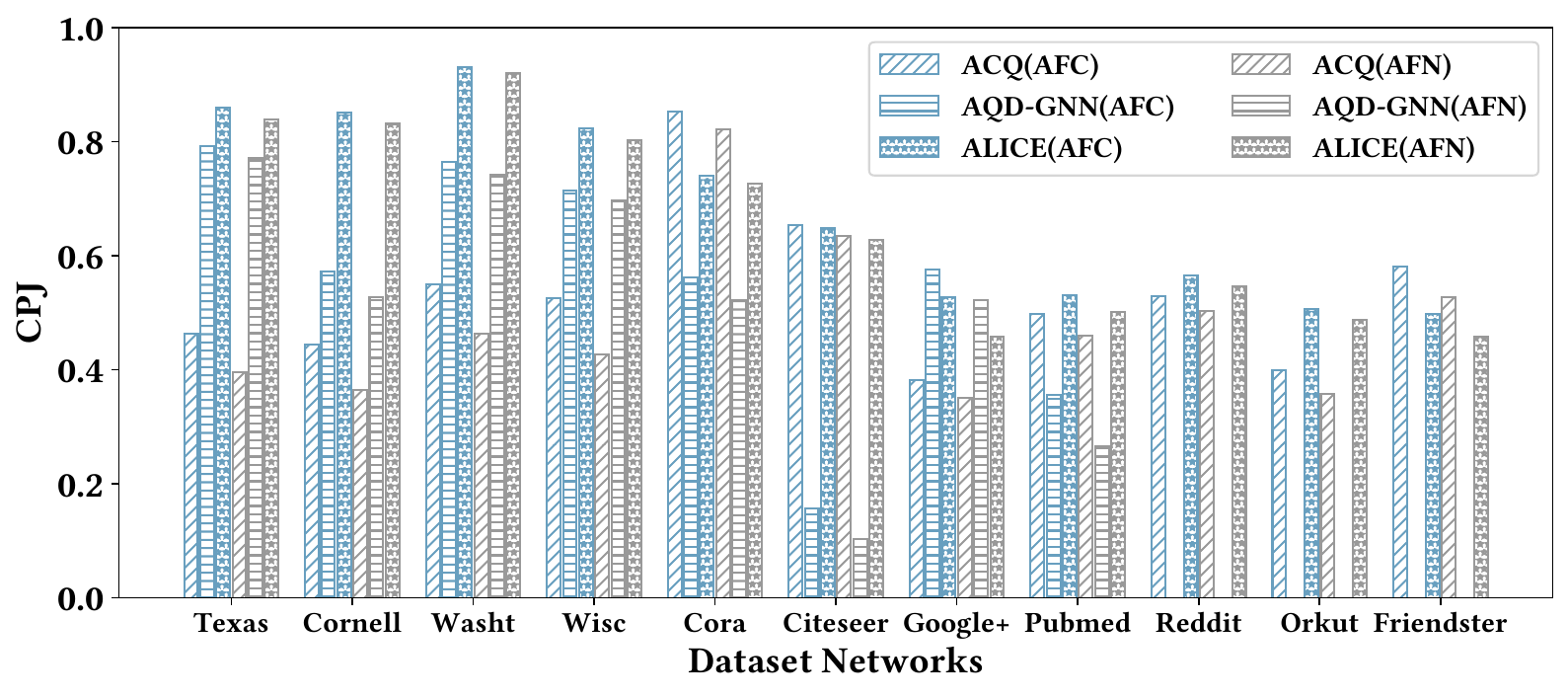}
    \vspace{-3mm}
    }
    \subfigure[{F1-score of ACS over multi-node query}]{
        \includegraphics[width=0.320\textwidth]{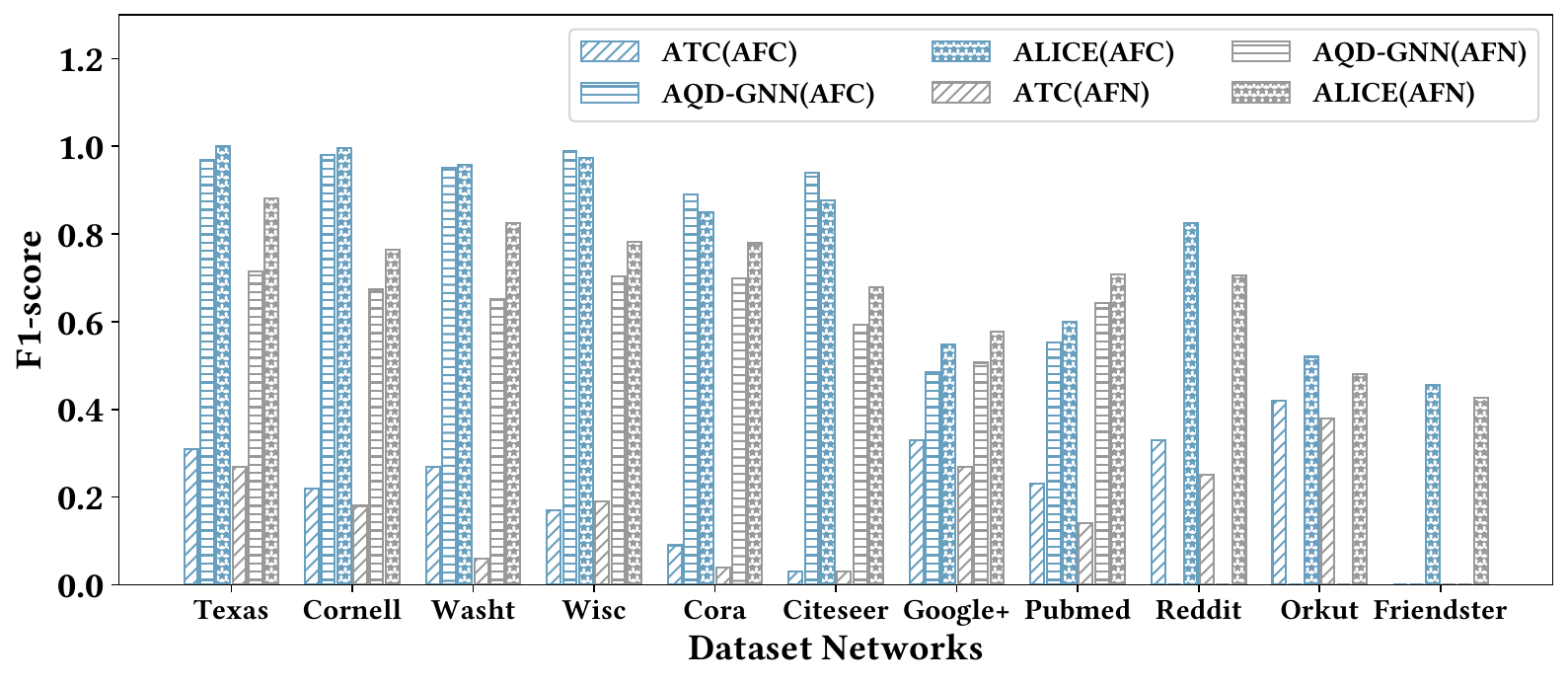} 
    \vspace{-3mm}
    }
    \subfigure[{Avg.d of ACS over multi-node query}]{
        \includegraphics[width=0.320\textwidth]{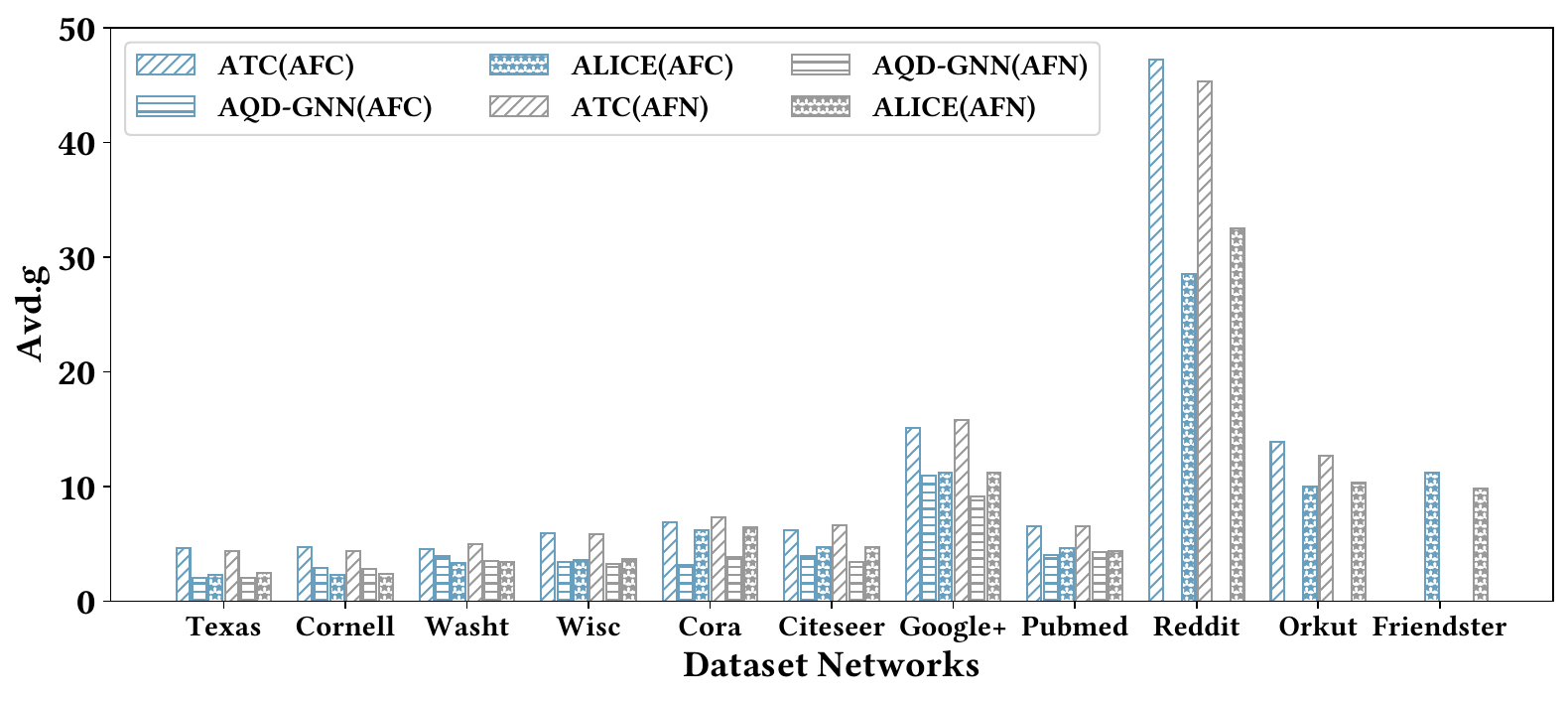} 
    \vspace{-3mm}
    }
    \subfigure[{CPJ of ACS over multi-node query}]{
        \includegraphics[width=0.320\textwidth]{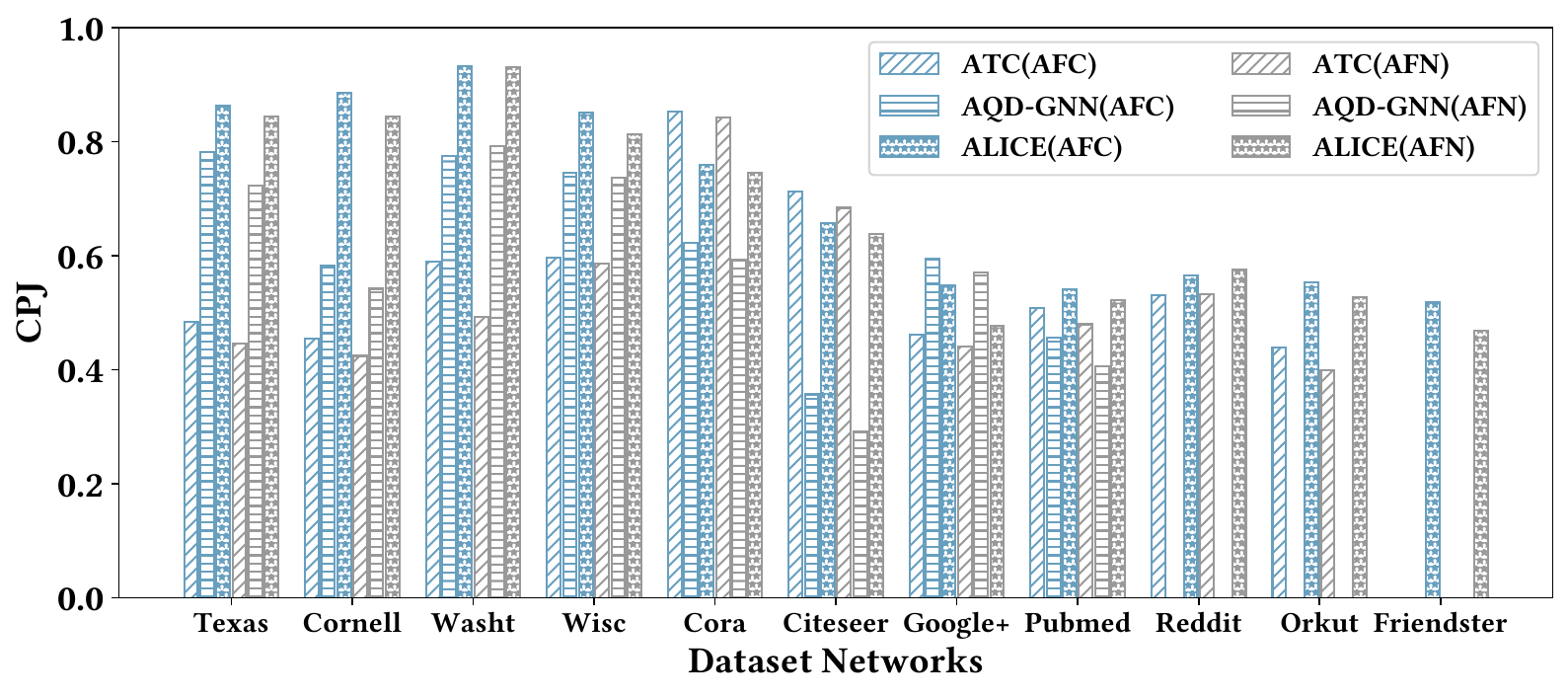} 
    \vspace{-3mm}
    }
\vspace{-4mm}
    \caption{Results on attributed community search}
    \label{fig:acsresult}
\vspace{-3mm}
\end{figure*}

\vspace{1mm}
\noindent\textbf{Exp-2: Non-Attributed Community Search.}
We further test the performance of \netname~for non-attributed community search. \ksize-ECC \cite{chang2015index} and CTC \cite{DBLP:journals/pvldb/HuangLYC15} are used as non-learning-based baselines. Both AQD-GNN and \netname~use the EmA setting for non-attributed community search where the query attribute is set as empty. The overall result is illustrated in Figure~\ref{fig:nac_result} (a). As depicted in the figures, the learning-based approach outperforms non-learning methods significantly. {\color{black}Among all the methods, \netname~ exhibits the highest performance across the datasets with an average improvement of 40.03\% when compared with $k$-ECC in F1-score and of 6.74\% when compared with AQD-GNN in F1-score.} 

\begin{figure*}
\vspace{-2mm}
\subfigcapskip=-6pt 
    \subfigure[{Results on non-acs}]{ 
        \centering
        \includegraphics[width=0.32\textwidth]{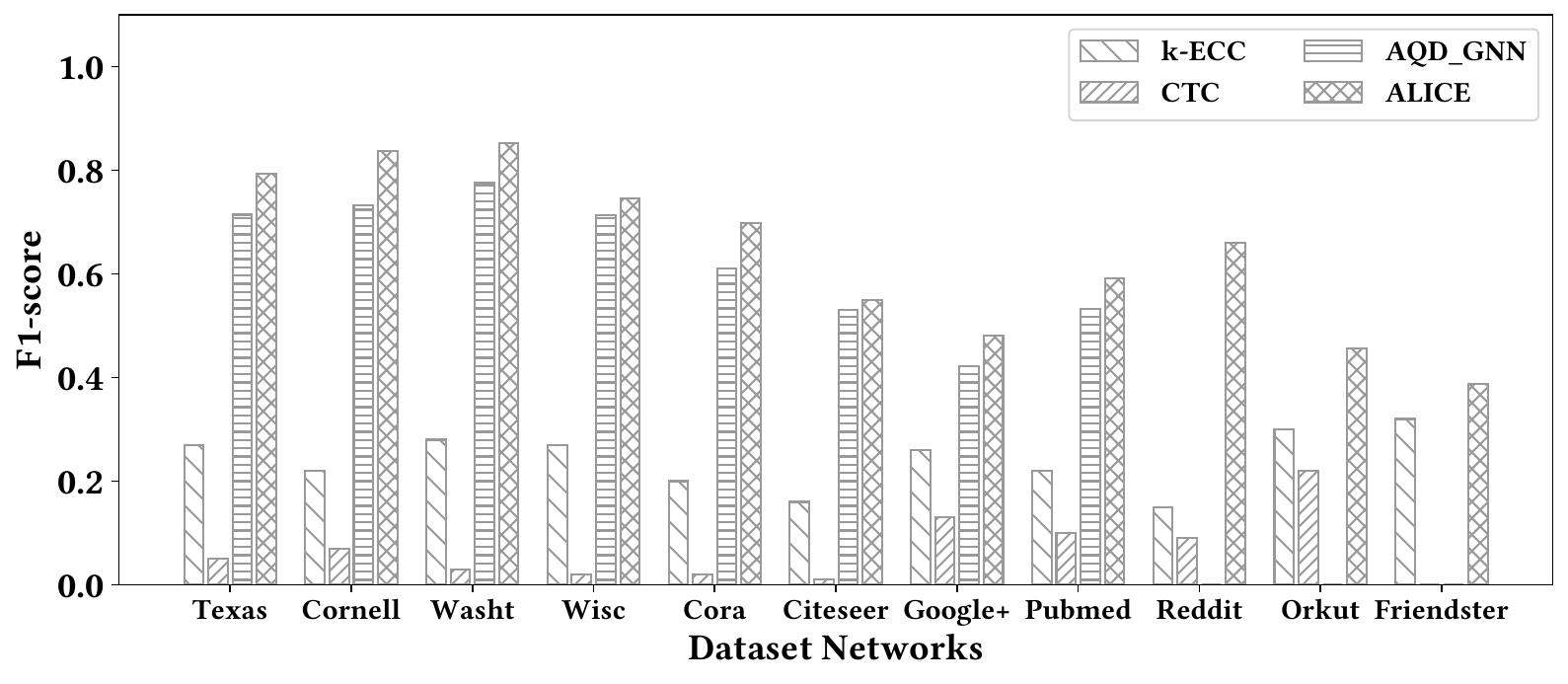}
    \vspace{-3mm}
    }
    \subfigure[{Results on ics}]{
        \centering
        \includegraphics[width=0.32\textwidth]{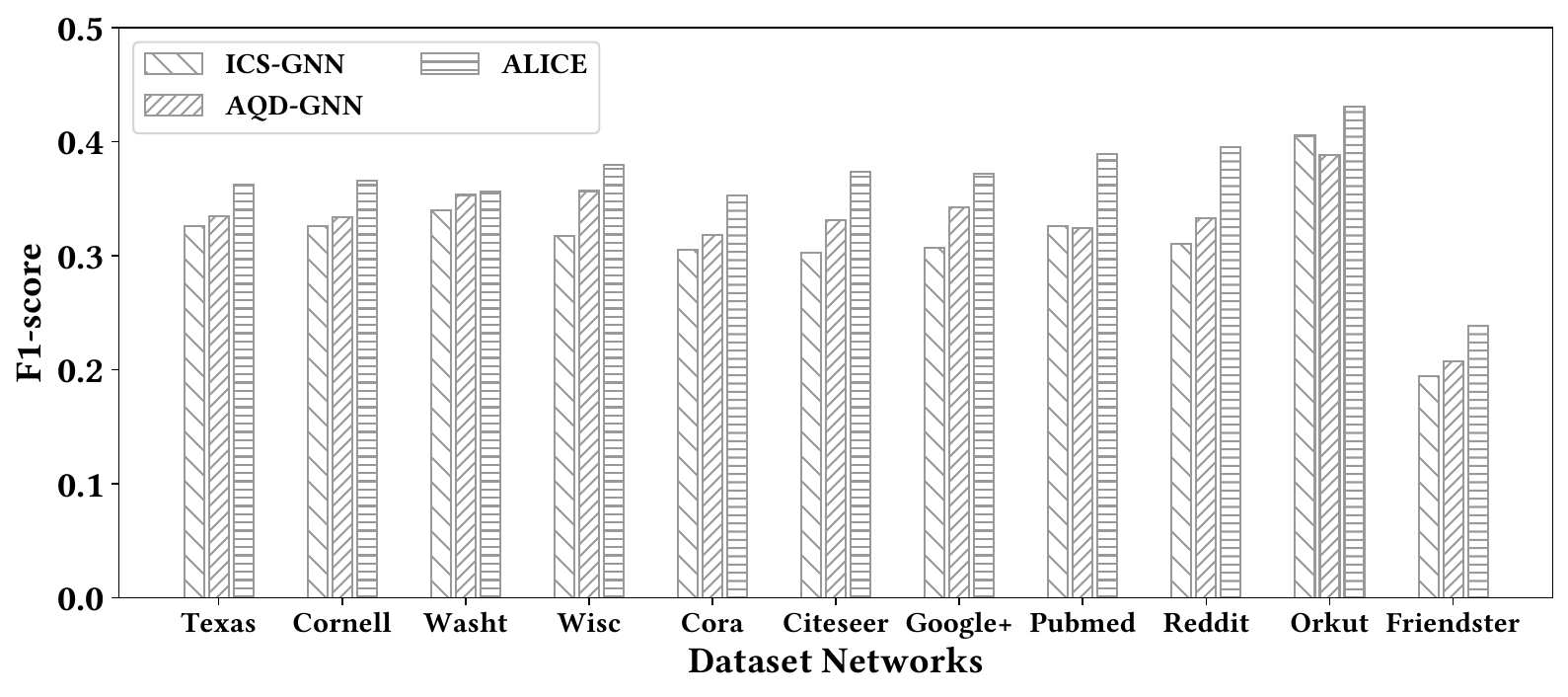} 
    \vspace{-3mm}
    }
    \subfigure[{Results with incomplete ground-truth}]{
        \centering
        \includegraphics[width=0.32\textwidth]{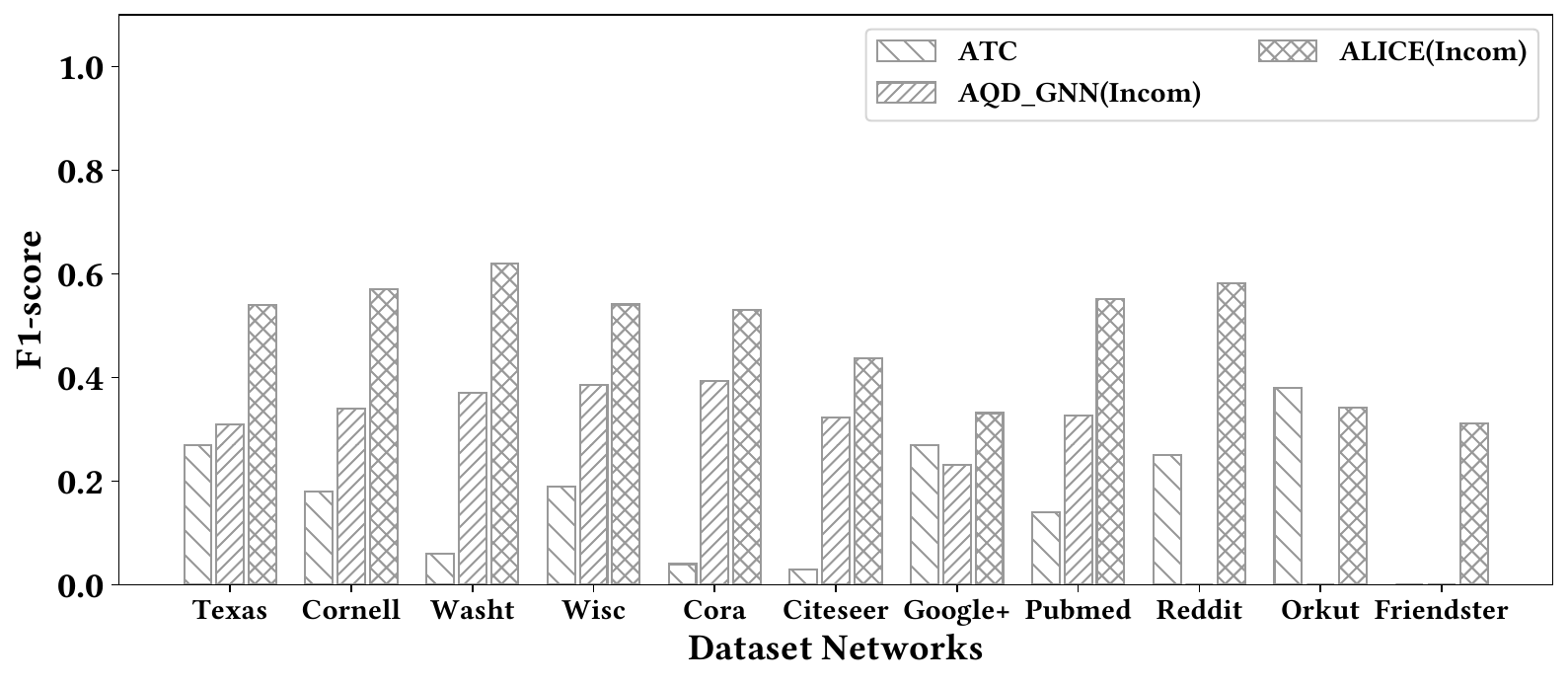} 
    \vspace{-3mm}
    }
\vspace{-6mm}
\caption{Community search with other settings}
\label{fig:nac_result}
\vspace{-3mm}
\end{figure*}

\vspace{1mm}
\noindent\textbf{Exp-3: Interactive Community Search.}
Interactive community search (ICS) is recently proposed in ICS-GNN~\cite{gao2021ics}.
ICS-GNN generates an answer community in response to a query in multiple rounds, and the community can be refined through user feedback. ICS-GNN follows a process of identifying a candidate subgraph, learning node embeddings through a Vanilla GCN model, and using a BFS-based algorithm to select a community of size $k$ with the highest GNN scores. In this part, we replace the Vanilla GCN model with AQD-GNN and \netname~to evaluate the performance of ICS. The results of three models are depicted in Figure~\ref{fig:nac_result} (b). From the figure, we find that {\color{black}\netname~consistently shows the best performance. }

\vspace{1mm}
\noindent\textbf{Exp-4: ACS under incomplete gound-truth.}
{\color{black}This section investigates the robustness of learning-based techniques when dealing with incomplete ground-truth information. Figure~\ref{fig:nac_result} (c) illustrates the results. Specifically, for each training pair $\{q_i, C_{q_i}\}$, we randomly mask 50\% of the nodes in $C_{q_i}$ and rely solely on the remaining half as the ground-truth to train the model. We use AQD-GNN and ATC as baselines as they support multi-node query. The figure shows that \netname~exhibits a notable level of effectiveness, with an average F1-score improvement of 16.05\% and 30.34\% compared to AQD-GNN and ATC, respectively. This is attributable to the modularity-based pruning and the local consistency that put structure cohesiveness prior to the community. Unlike AQD-GNN, which relies solely on the supervised signal $\mathcal{L}_b$, the proposed \netname~ incorporates two new unsupervised signals, $\mathcal{L}_w$ and $\mathcal{L}_b$, to enhance the robustness of the model under incomplete ground-truth data.}

\subsection{Efficiency Evaluation}
\label{Effectiveness}

\noindent\textbf{Exp-5: Evaluation of Different Stages.}
Since the learning-based methods have demonstrated superior performance compared to other baseline methods. Hence, we further assess the efficiency of ICS-GNN, AQD-GNN and \netname. {\color{black} Table~\ref{tab:efficiency_evaluation} presents two phases of time, including train time and query time, both of which are crucial for efficiency. We report data preparation time plus the model train (query) time for each phase. Data preparation involves loading the data, initializing features, and extracting candidate subgraphs. Note that, ICS-GNN needs to train one model for one query, thus we report the total time needed for one query in the query time phase. }
The algorithms are trained using a dataset comprising 150 samples and are subsequently queried using one single data sample. "---" indicates that the algorithm is out of memory or needs more than 7 days during the evaluation. "***" indicates that one cell is not applicable for one model.
The table shows that AQD-GNN is faster than \netname~in the extremely small graph with only hundreds of nodes while \netname~performs much more efficiently for large-scale graphs. Moreover, \netname~ can finish training on \textit{Reddit}, \textit{Orkut} and \textit{Friendster}, whereas ADQ-GNN cannot. The reason is that AQD-GNN uses the whole graph as the input and trains the model. 
For the medium and large graphs, the pruning process can significantly reduce the training size and avoid the issue of being out of memory. The query time of both AQD-GNN and \netname~ are consistent and are much smaller than the preparation and training time. {\color{black}Moreover, ICS-GNN needs much more query time as it needs to train one model for one input query while training a model is a time-intensive task.}

\begin{table*}
\caption{Efficiency evaluation on different datasets (in seconds)}
 \vspace{-4mm}
 
 \centering \scalebox{0.7}{
 \begin{tabular}{|l|c|c|c|c|c|c|c|c|c|c|c|l}\toprule
    Method &{\textit{Texas}}  & {\textit{Cornell}}  & {\textit{Washt}}  & {\textit{Wisc}} 
    & {\textit{Cora}} & {\textit{Citeseer}} & {\textit{Google+}}
    & {\textit{Pubmed}} & {\textit{Reddit}}  & {\textit{Orkut}}
    & {\textit{Frienster}} 
    
                 \\\midrule

  ICS-GNN (Train)& ***                    & ***                      & ***                     & ***                     & ***    & ***  & ***  & ***  & ***  & ***  & ***  
 \\                 
 AQD-GNN (Train) &   2.2+233                     & 2.1+234                 & 2.5+239                     & 2.9+232                  & 64.1+2214 & 59.3+4390                    & 834.6+10035                      & 3171.8+37059 & ---    & ---       & ---                     
 \\
 ALICE (Train)& 2.6+344              & 2.5+381                    & 3.8+332                     & 1.8+324                   & 16.32+509  & 59.8+1239               & 189.8+3256     & 123.5+4317 & 8681+1107  & 2594.8+2224 & 65415.6+1244         
 
    \\
ICS-GNN (Query)& 20.5                    & 25.1                    & 27.4                 & 28.6                   & 167.7    & 124.3  & 627.6  & 112.3  & 1034.7  & 1540.8  & 24253.7 
 \\
 AQD-GNN (Query)&   0.015+0.0021                     & 0.014+0.0020                   & 0.017+0.0022                    & 0.019+0.0020                    & 0.427+0.0026 & 0.395+0.0019                    & 5.564+0.0019                      & 21.14+0.0019 & ---       & ---       & ---                      
 \\
    ALICE (Query) & 0.017+0.0053                     & 0.017+0.0045                     & 0.025 + 0.0044                   & 0.014 +0.0050                    & 0.104+0.0041    & 0.398+0.0047                    & 1.26+0.0053       &  0.823+0.0058  & 5.78+0.0052  & 17.29+0.0045  & 436.1+0.0048             
 
    \\\bottomrule
 \end{tabular}}
 \begin{flushleft}
        \footnotesize $(1):$ We report preparation time $+$ train (query) time; $(2):$ --- indicates out of memory or not finished within 7 days; $(3):$  *** indicates this cell not applicable to this model. 
    \end{flushleft}
 
 \label{tab:efficiency_evaluation}
\end{table*}

\vspace{1mm}
\noindent\textbf{Exp-6: Scalability Evaluation.}
In this part, we evaluate the scalability of AQD-GNN and \netname~ in Figure~\ref{fig: Scalability}. The total time consists of the data preparation time, the model training time, and the query time. The result of the relation between total time and node number is shown on the left of Figure~\ref{fig: Scalability}. The relation between total time and edge number is shown on the right of Figure~\ref{fig: Scalability}. From the figure, we find that the time cost of \netname~grows at a much slower rate than AQD-GNN as the number of nodes or edges increases, which confirms the high scalability of our design.

\begin{figure}
 \vspace{-2mm}
    \centering
    \includegraphics[width=0.22\textwidth]{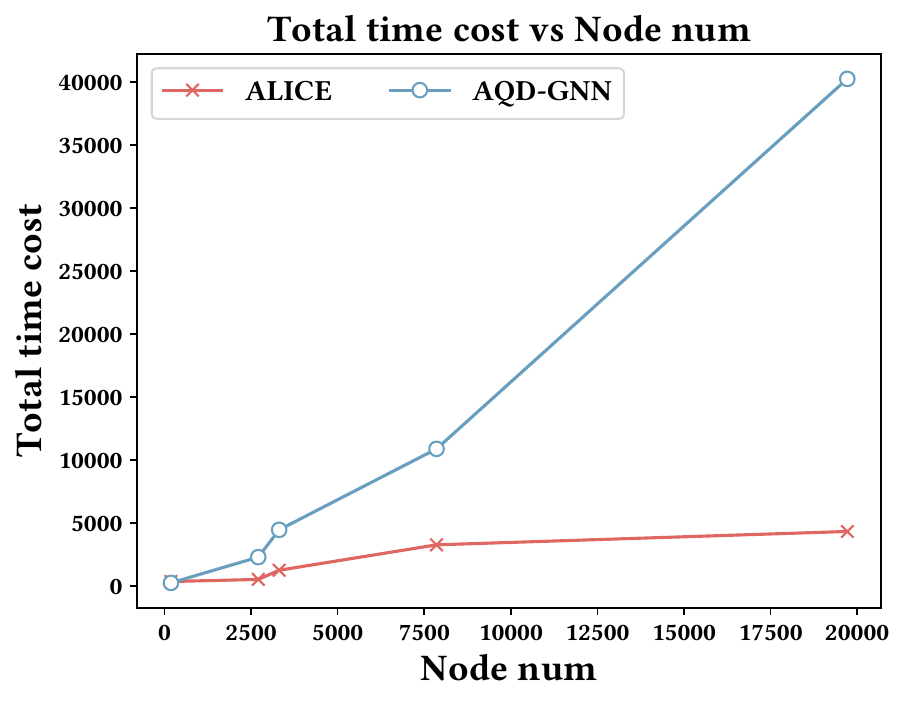} 
     \vspace{0.5em}
    \includegraphics[width=0.22\textwidth]{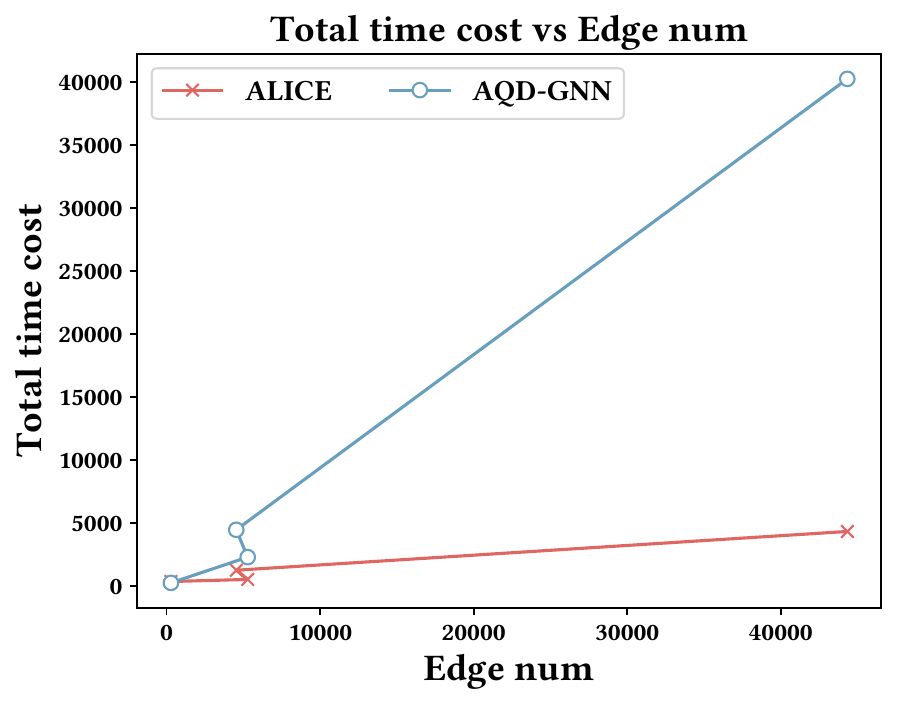}  
    \vspace{-7mm}
    \caption{Scalability evaluation}
    \label{fig: Scalability}
    \vspace{-8mm}
\end{figure}

\begin{figure*}
\vspace{-2mm}
    \includegraphics[width=0.90\textwidth]{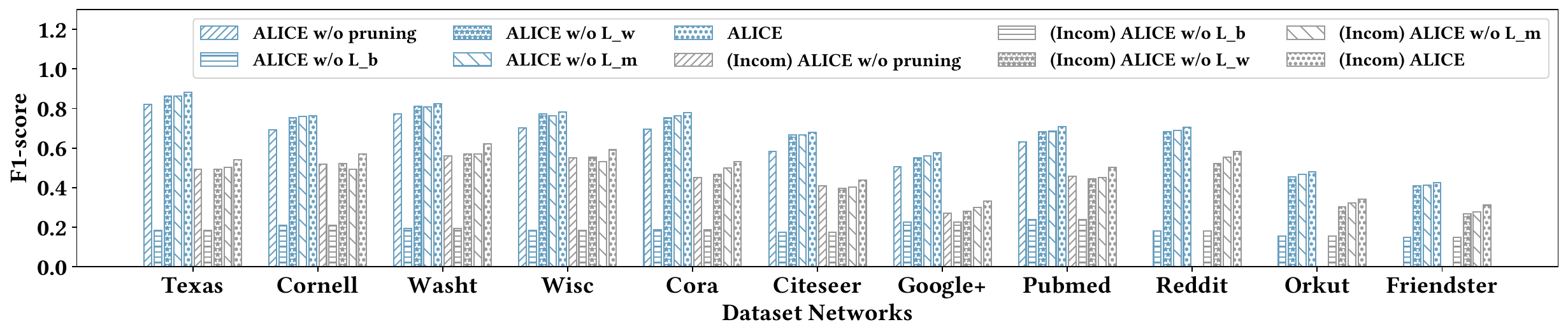}
\vspace{-5mm}
\caption{Ablation study}
\label{fig:ablation_study}
\vspace{-3mm}
\end{figure*}

\begin{figure*}
\vspace{-2mm}
\subfigcapskip=-6pt 
    \subfigure[{Accuracy}]{ 
        \centering
        \includegraphics[width=0.44\textwidth]{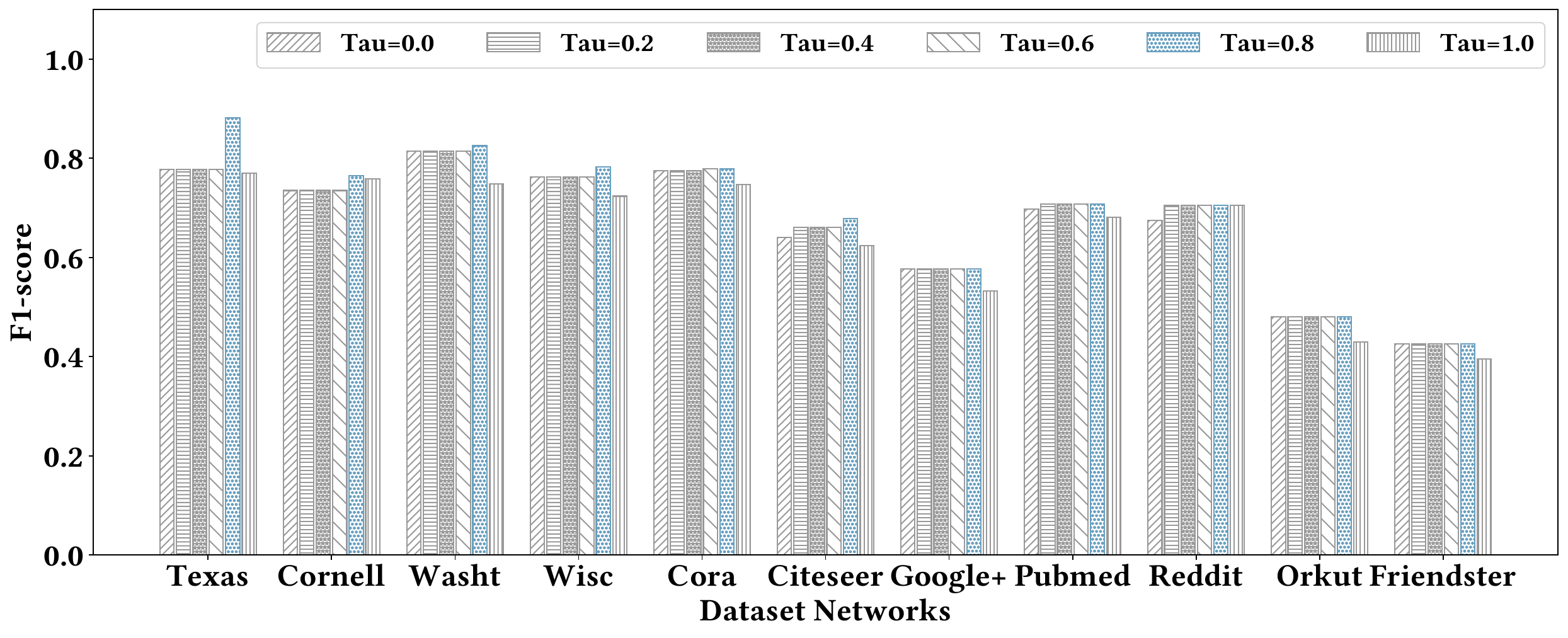}
    \vspace{-4mm}
    }
    \subfigure[{Subgraph size}]{
        \centering
        \includegraphics[width=0.44\textwidth]{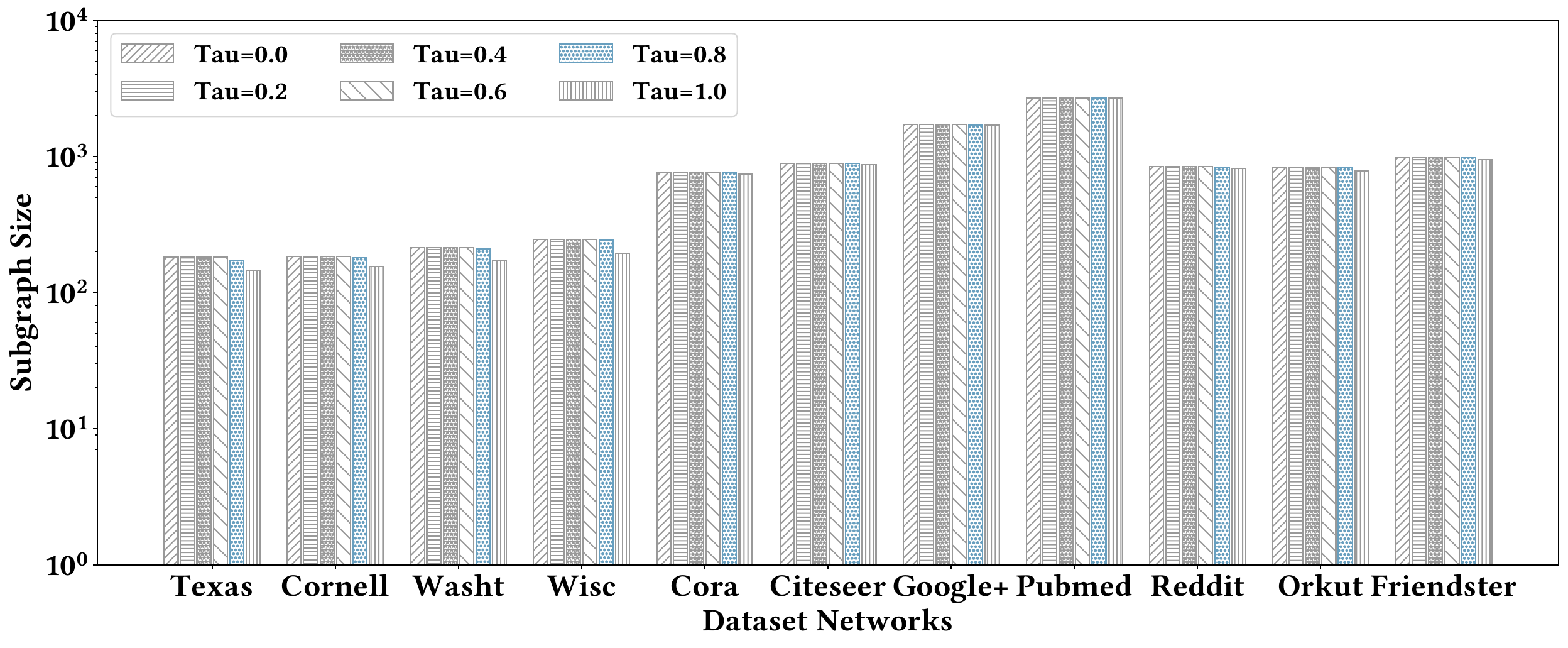} 
    \vspace{-4mm}
    }
\vspace{-7mm}
    \caption{Comparison of different modularities}
    \label{fig:modu_comparation}
\vspace{-6mm}
\end{figure*}

\vspace{-4mm}
\subsection{Ablation Study}
\label{Ablation_Study}
\noindent\textbf{Exp-7: Different Components.} 
In this part, we conduct experiments over different variants of \netname~to verify the effectiveness of each component in the proposed model. The overall results are demonstrated in Figure~\ref{fig:ablation_study}. The first variant (denoted by w/o Pruning) is trained on the whole graph to test the influence of candidate subgraph extraction. The second variant (denoted by w/o $\mathcal{L}_b$) is trained without the loss $\mathcal{L}_b$. The third variant (denoted by w/o $\mathcal{L}_w$) is trained without the loss $\mathcal{L}_w$ to check the effectiveness of the structure-attribute consistency.  The fourth variant (denoted by w/o $\mathcal{L}_m$) is trained without the loss $\mathcal{L}_m$ to confirm the effectiveness of the local consistency. We also test the effectiveness of these components under incomplete ground-truth (denoted by Incom) as Exp-4. 
As shown in the figure, we can find that all four components can improve the accuracy of ACS.
The prediction of the community is most heavily influenced by $\mathcal{L}_b$  since it supplies ground-truth information. The pruning technique can effectively reduce the search space, and hence be helpful to increase both the effectiveness and efficiency of the model. {\color{black} Additionally, it is observed that $\mathcal{L}_w$ and $\mathcal{L}_m$ can enhance the F1-score by 2.71\% and 2.16\% in average respectively under complete ground-truth labels and improve performance by 4.57\% and 4.03\% respectively under incomplete ground-truth information. The results validate the effectiveness of $\mathcal{L}_w$ and $\mathcal{L}_m$, especially under incomplete ground-truth labels. }

\vspace{1mm}
\noindent\textbf{Exp-8: Different Modularity Definitions.}
{\color{black} In this section, we test the model performance with density sketch modularity of varying $\tau$. Note that different $\tau$ will lead to different modularity definitions, e.g, $\tau=0$ leads to classical modularity and $\tau=1$ leads to density modularity as analyzed in Section~\ref{sec:Candidate_Subgraph_Extraction}.} The result is reported in Figure~\ref{fig:modu_comparation}, with the accuracy in Figure~\ref{fig:modu_comparation}(a) and the subgraph size comparison in Figure~\ref{fig:modu_comparation}(b). The figure shows that (1) density sketch modularity with $\tau=0.8$ has a better F1-score than other choices on the test datasets; (2) the subgraph size remains the same or decreases as $\tau$ increases, which aligns with our analysis in Section~\ref{sec:Candidate_Subgraph_Extraction}.

\vspace{-6mm}
\subsection{Hyper-parameter Sensitivity}
\label{subsec:Sensitivity}
\vspace{-1mm}
\noindent\textbf{Exp-9: Varying $\alpha$ and $\beta$.} {\color{black}In this section, we explore the effect of $\alpha$ and $\beta$ to the performance of \netname. We use two datasets including \textit{Texas} and \textit{Cornell}. The results are illustrated in Figure~\ref{fig:alpha_beta_heatmap}. The figure demonstrates that parameters of small but positive values (i.e., 0.1\textasciitilde0.5) outperform other settings. Specifically, we set both $\alpha$ and $\beta$ varying from 0 to 1 and report the F1-score at an interval of 0.1. The results show that (1) $\alpha$ and $\beta$ in the range from 0.1 to 0.5 outperform other settings on the test datasets; (2) $\alpha$ and $\beta$ of small but positive values outperform those of zero. This further validates the effectiveness of $\mathcal{L}_w$ and $\mathcal{L}_m$.}

\vspace{1mm}
\noindent\textbf{Exp-10: Training Epoch.}
In this part, we explore the impact of the training epoch on the accuracy of ACS. We train the model  from 1 to 300 epochs and evaluate the accuracy every 20 epochs. Figure~\ref{fig:sensitivity}(a) presents the results. The figure shows that the performance of \netname~improves considerably as the training epoch increases. After 200 epochs, the rate of improvement decelerates, eventually leading the model to attain a stable state.

\vspace{1mm}
\noindent\textbf{Exp-11: Training Loss.}
In this part, we investigate the loss during training. Figure~\ref{fig:sensitivity} (b) reports the loss of \netname~ over different datasets. In the figure, it is evident that the loss diminishes significantly at the onset, decreasing by approximately $95\%$ following 20 epochs. Thereafter, the loss progressively converges towards 0.

\vspace{1mm}
\noindent\textbf{Exp-12: Training Set Size.}
We also study the sensitivity of the number of samples used in the training phase. We increase the size from 50 to 300 and test the performance of the model every 50 samples.
The result is reported in Figure~\ref{fig:sensitivity} (c). It can be observed that the accuracy increases with the sample size ranging from 50 to 150. After the sample size reaches 150, the accuracy stabilizes.

\vspace{1mm}
\noindent\textbf{Exp-13: Validation Set Size.}
We test the accuracy with varying numbers of samples used in the validation phase. The sample size was increased from 50 to 300, and the accuracy is reported every 50 samples. The results are presented in Figure~\ref{fig:sensitivity}(d). The figure demonstrates that there is a consistent level of accuracy across different sample sizes of validation.

\begin{figure}
\vspace{-2mm}
    \subfigcapskip=-6pt 
    \subfigure[{Texas}]{ 
        \centering
        \includegraphics[width=0.225\textwidth]{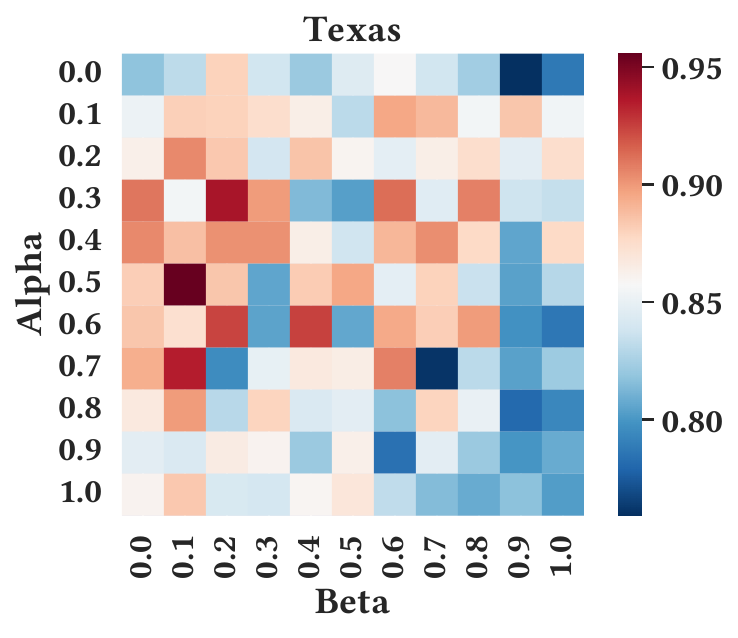}
        \captionsetup{skip=10pt}
      }
    \vspace{-3mm}
    \subfigure[{Cornell}]{
        \centering
        \includegraphics[width=0.225\textwidth]{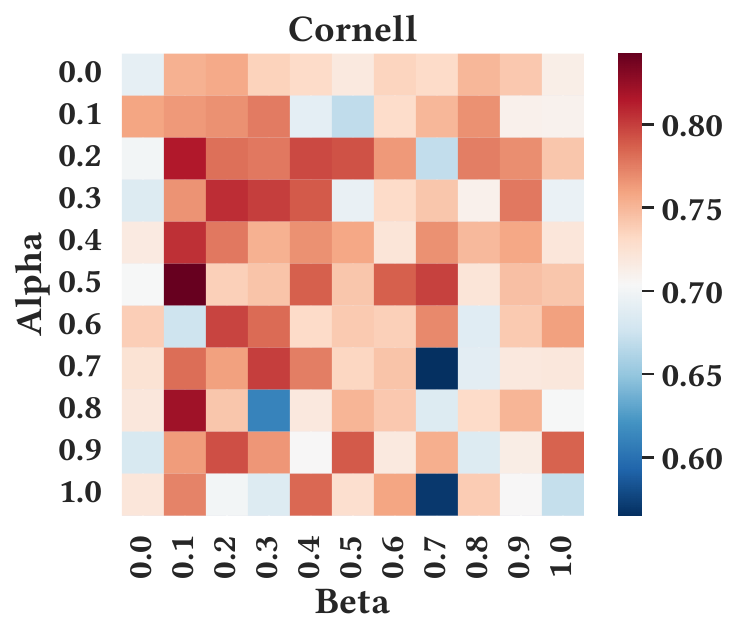} 
    }
    \vspace{-3mm}
    \caption{Results on varying $\alpha$ and $\beta$}
    \label{fig:alpha_beta_heatmap}
\vspace{-3mm}
\end{figure}

\begin{figure}
\vspace{-4mm}
\subfigcapskip=-6pt 
    \subfigure[{Varying epoch number}]{ 
        \centering
        \includegraphics[width=0.225\textwidth]{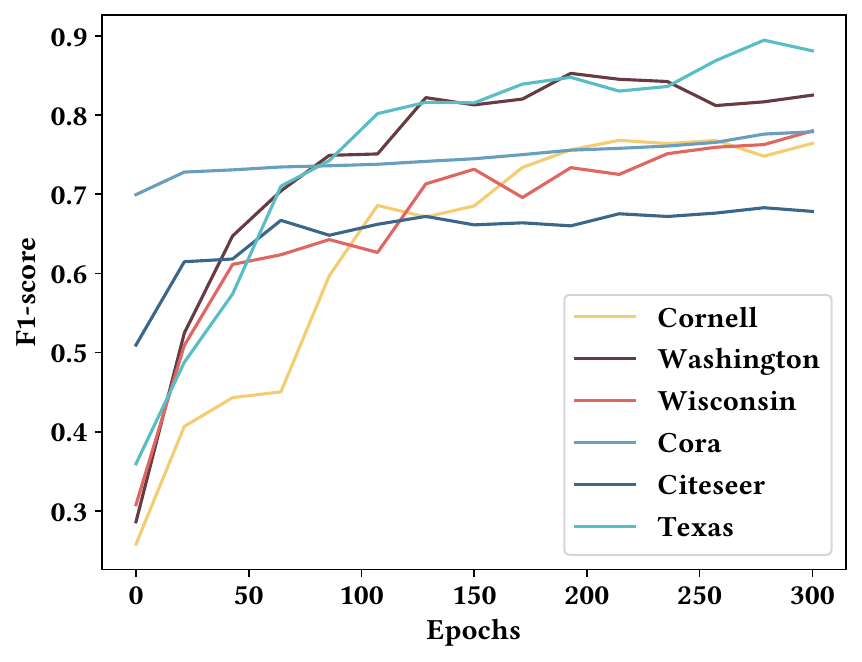}
      }
    \vspace{-3mm}
    \subfigure[{Train loss}]{
        \centering
        \includegraphics[width=0.225\textwidth]{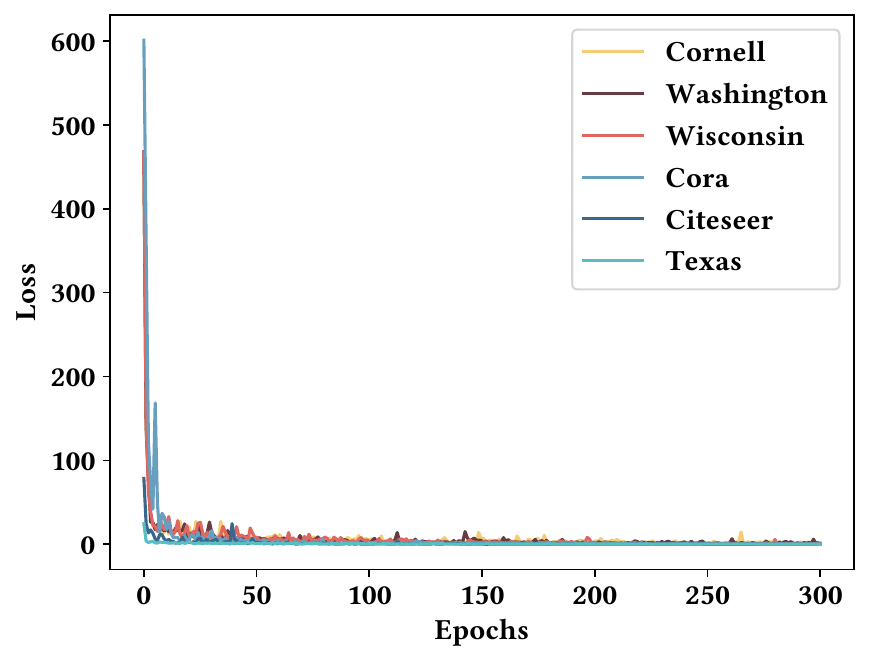} 
    }
    \vspace{-3mm}
    \subfigure[{Varying training set size}]{
        \centering
        \includegraphics[width=0.2257\textwidth]{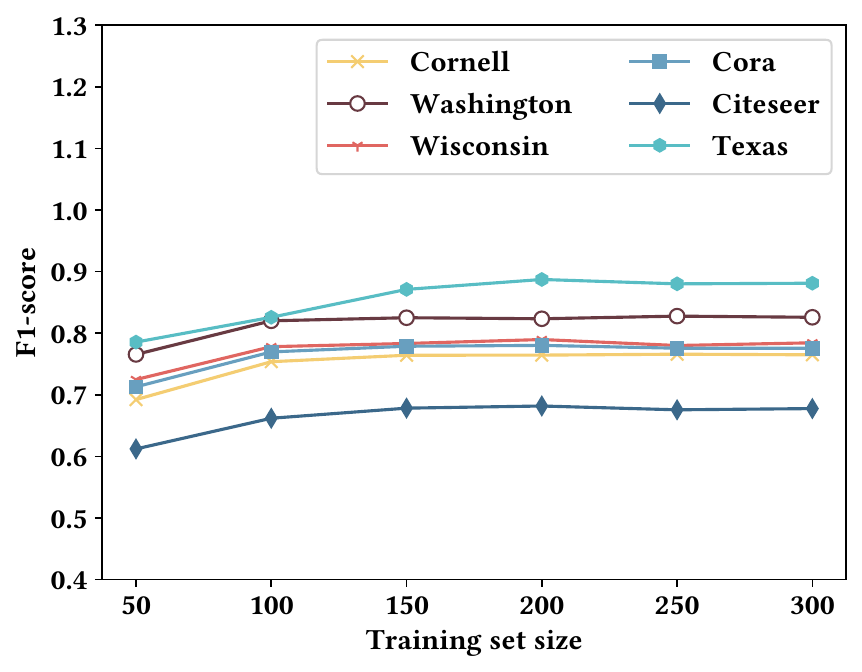} 
    }
    \vspace{-3mm}
    \subfigure[{Varying validation set size}]{
        \centering
        \includegraphics[width=0.225\textwidth]{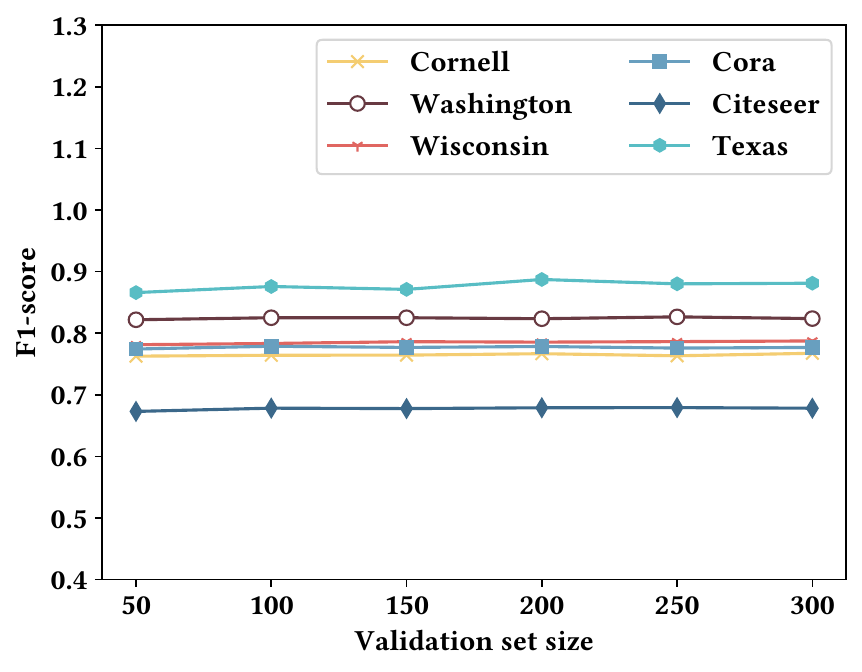} 
    }
    \caption{Experiments on hyper-parameter sensitivity}
    \label{fig:sensitivity}
\vspace{-3mm}
\end{figure}

\begin{figure}
\vspace{-2mm}
    \subfigcapskip=-4pt 
    \subfigure[{AQD-GNN}]{ 
        \centering
        \includegraphics[width=0.20\textwidth]{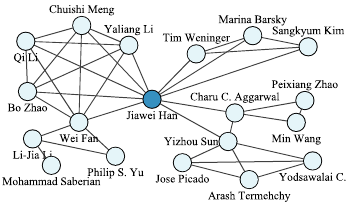}
        \captionsetup{skip=10pt}
      }
    \vspace{-3mm}
    \subfigure[{ALICE}]{
        \centering
        \includegraphics[width=0.22\textwidth]{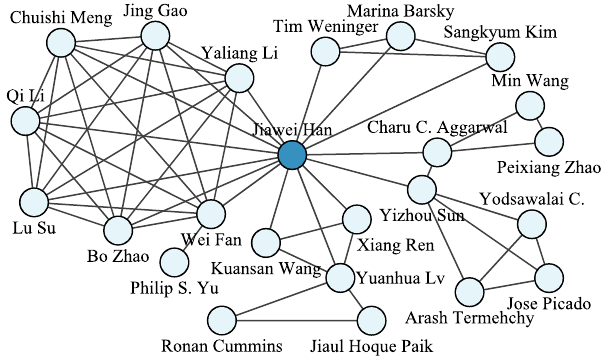} 
    }
     \vspace{-3mm}
    \subfigure[{ALICE} for query \{Jiawei Han, Alan L. Yuille, DB, IR, CV\}]{
        \centering
        \includegraphics[width=0.43\textwidth]{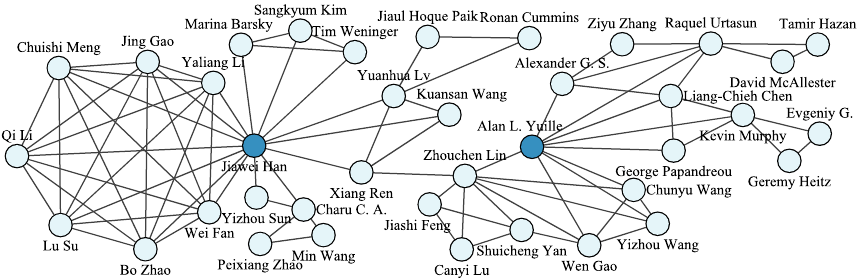} 
    }
    \vspace{-3mm}
    \caption{Case study}
    \label{fig:case_study}
\vspace{-6mm}
\end{figure}

\vspace{-4mm}
\subsection{Case Study}

{\color{black}In this part, we conduct a case study using arXiv~\cite{arxiv_org_submitters_2023} which is a co-author network comprising a diverse collection of scholarly articles from fields such as physics, mathematics, and computer science. We select articles from the field of computer science published between 2011 and 2015, and build a co-author network consisting of 42,065 nodes and 102,165 edges. The network encompasses 40 subdisciplines, e.g., database (DB), and information retrieval (IR). We consider Prof Jiawei Han, a renowned researcher in database and data mining. We use Jiawei Han as a query node and use 2 query attributes, including DB and IR. The return communities are shown in Figure~\ref{fig:case_study}. Specifically, Figure~\ref{fig:case_study} (a) illustrates the community discovered by AQD-GNN and Figure~\ref{fig:case_study} (b) demonstrates the community discovered by \netname. As AQD-GNN is out of memory, we input the candidate subgraph generated by ~\netname~for both methods, with a cap of 400 nodes. We delete unconnected nodes from the result communities. The figure shows that both methods can effectively find the research collaborators of Jiawei Han. Moreover, we find that AQD-GNN misses some promising candidates, e.g., Xiang Ren who is a former student of Jiawei Han and was in close cooperation with Jiawei Han. In contrast, \netname~can discover such an important candidate. Additionally, we further consider Prof Alan L. Yuille, a renowned researcher in computer vision (CV). We use Jiawei Han and Alan L. Yuille as query nodes and use \{DB, IR, CV\} as query attributes. The result is illustrated in Figure~\ref{fig:case_study} (c). We can find that although Jiawei Han and  Alan L. Yuille belong to two different communities under the arXiv network, nodes in the returned community are closely related to the query.}

\vspace{-4mm}
\section{Related Work}
\label{sec:relatedwork}
In this section, we review related studies regarding classical community search and ML/DL for community search.

\noindent \textbf{Classical Community Search.}
Community search aims to find cohesive subgraphs in a graph that contain query nodes and satisfy given constraints. Some well-known classical methods use \ksize-related measurements to model the community such as \ksize-core~\cite{cui2014local, wang2023cohesive, zhang2016engagement}, \ksize-truss~\cite{huang2014querying, wang2020efficient}, \ksize-clique~\cite{wang2022efficient, yuan2017index}, and \ksize-edge connected component (\ksize-ECC)~\cite{chang2015index, hu2016querying}. However, these methods suffer from structure inflexibility (i.e., the real-world community may always dissatisfy with the constraints). Recently, a graph modularity-based method~\cite{DBLP:conf/sigmod/KimLCY22} is proposed, which aims to find a subgraph that has the maximum graph modularity. When targeting attributed graphs, several algorithms have been proposed that consider both structure and keyword cohesiveness, e.g. ACQ~\cite{fang2016effective} and ATC~\cite{huang2017attribute}. Both algorithms adopt a two-stage procedure that first considers the structure constraint and then selects the community with the highest attribute score. However, they fail to capture the correlation between structure and attribute that are closely related.

\noindent \textbf{ML/DL for Community Search.}
With the powerful approximation ability of the neural network, learning-based techniques (like GNNs) have been adopted for community search. The general approach of using GNNs for community search is to model the problem as a binary node classification task, where the goal is to predict the probability of each vertex belonging to a community.  ICS-GNN~\cite{gao2021ics} leverages a GNN model to search community in an iterative manner where the community is modeled as a \ksize-sized subgraph with maximum GNN scores. AQD-GNN~\cite{jiang2022query} is proposed to support the attributed graph that takes both the cohesive structure and homogeneous attributes into account. However, these methods suffer from severe efficiency issues.
\vspace{-4mm}
\section{Conclusion}
\label{sec:conclusion}

In this paper, we explore the problem of attributed community search. 
To improve the accuracy and efficiency, we propose a novel model \netname~that first extracts the candidate subgraph and then predicts the community based on the query and candidate subgraph. 
In the candidate subgraph extraction phase, we design a new modularity named density sketch modularity and adaptively select a reasonable amount of neighbors considering both structure and attribute. 
In the prediction phase, we devise \aggname~to integrate consistency constraints for the prediction of the attributed community. It utilizes a cross-attention encoder to encode the interaction information between the query and the data graph. Structure-attribute consistency and local consistency are utilized to guide the training of the model.
We conduct various experiments over 11 real-world benchmark datasets from multiple aspects. The results demonstrate that \netname~shows a good performance in terms of prediction accuracy, efficiency, robustness, and scalability.


\balance
{
\bibliographystyle{ACM-Reference-Format}
\bibliography{sample}
}

\end{document}